\documentclass[sn-mathphys,Numbered]{sn-jnl}

\usepackage{graphicx}
\usepackage{multirow}%
\usepackage{amsmath,amssymb,amsfonts}%
\usepackage{amsthm}%
\usepackage{mathrsfs}%
\usepackage[title]{appendix}%
\usepackage{xcolor}%
\usepackage{textcomp}%
\usepackage{manyfoot}%
\usepackage{booktabs}%
\usepackage{algorithm}%
\usepackage{algorithmicx}%
\usepackage{algpseudocode}%
\usepackage{listings}%
\usepackage{float}%
\usepackage{siunitx}
\usepackage{tabularx}
\usepackage{ltablex}
\usepackage{caption}
\usepackage{makecell}
\usepackage{diagbox}
\usepackage{bm}
\newcommand\setrow[1]{\gdef\rowmac{#1}#1\ignorespaces}
\newcommand\clearrow{\global\let\rowmac\relax}
\clearrow

\newtheorem{theorem}{Theorem}
\newtheorem{lemma}{Lemma}

\newtheorem{example}{Example}
\newtheorem{definition}{Definition}

\begin{document}
	
	\title{Two Sequence-Form Interior-Point Differentiable Path-Following Method to Compute Nash Equilibria}
	
	\author[1]{\fnm{Yuqing} \sur{Hou}}\email{yuqinghou2-c@my.cityu.edu.hk}
	
	\affil[1]{\orgdiv{Department of Automation}, \orgname{University of Science and Technology of China}, \orgaddress{\city{Hefei}, \country{China}}}
	
	
	\abstract{Nash equilibrium is a fundamental solution concept in extensive-form games, while its efficient computation is still far from straightforward. This paper considers finite $n$-player extensive-form games with perfect recall under the sequence-form representation. Unlike existing approaches, which mainly treat the sequence form as a compact computational reformulation, we develop a direct sequence-form definition of Nash equilibrium. Building on this, we rigorously establish the associated sequence-form Nash equilibrium system through an equivalence proof with mixed-strategy Nash equilibrium. On this basis, we propose a single-stage interior-point differentiable path-following method for equilibrium computation. The method uses logarithmic-barrier regularization to generate a differentiable equilibrium path in the interior of the realization-plan space, leading to favorable numerical stability and convergence properties. Numerical results show that the proposed method is effective and computationally efficient.}
	
	\keywords{Extensive-Form Game, Sequence Form, Nash Equilibrium, Differentiable Path-Following Method}
	
	\pacs[JEL Classification]{C72}
	
	
	\maketitle
	
	\section{Introduction}\label{sec1}
	Extensive-form games~\cite{KuhnExtensiveGames1950} provide a fundamental modeling framework for sequential decision-making problems under uncertainty, and have been widely applied in economics, operations research, and multi-agent systems. By explicitly capturing the temporal structure of decisions, information asymmetry, and strategic interactions among agents, they enable a rich representation of dynamic environments. Within this framework, equilibrium concepts play a central role, as they formalize stable and rational outcomes of strategic behavior. Consequently, the computation of equilibrium solutions is of primary importance for both theoretical analysis and practical applications. Among various equilibrium notions, Nash equilibrium~\cite{NashEquilibriumpointsnperson1950} serves as the cornerstone of game-theoretic analysis, providing a rigorous characterization of mutual best responses among players. This paper focuses on the computation of Nash equilibria for finite $n$-player extensive-form games with perfect recall using the sequence-form representation.
	
	Behavioral-strategy Nash equilibrium and mixed-strategy Nash equilibrium are two formulations of Nash equilibrium in extensive-form games under different strategy representations, and their realization equivalence has already been established. In our previous work~\cite{houSequenceformDifferentiablePathfollowing2025}, we computed the equilibrium system under the sequence-form representation on the basis of behavioral-strategy Nash equilibrium. This approach is also common in the existing literature, where the sequence form is used primarily as an equivalent change of variables for the efficient computation of a target equilibrium, rather than as a representation in which equilibrium is defined directly. However, formulating equilibrium concepts directly in the sequence-form setting is of fundamental importance for understanding their underlying structure, enabling a more rigorous derivation of equilibrium systems, and clarifying the relationships among different equilibrium notions. Taking the normal-form representation as the analytical foundation, this paper extends Nash equilibrium to the sequence-form setting by introducing the notion of sequence-form Nash equilibrium and establishes its equivalence to mixed-strategy Nash equilibrium. Building on this equivalence, we derive the associated equilibrium system in a more rigorous way.
	
	This paper develops a differentiable interior-point path-following framework for computing Nash equilibria in finite extensive-form games under the sequence-form representation. The proposed method regularizes the payoff functions by logarithmic-barrier terms to construct a differentiable equilibrium path in the interior of the realization-plan space, thereby enabling stable numerical tracing toward a Nash equilibrium. Path-following methods have long been recognized as a powerful class of algorithms for computing Nash equilibria in normal-form games. Their origins can be traced to the complementary pivoting procedure of Lemke and Howson~\cite{LemkeEquilibriumPointsBimatrix1964}, which was later extended to $n$-player games by Rosenmüller~\cite{RosenmullerGeneralizationLemkeHowson1971} and Wilson~\cite{WilsonComputingEquilibriaNPerson1971}. Subsequent developments focused on simplicial methods and their refinements, which improved practical implementability~\cite{GarciaSimplicialApproximationEquilibrium1973,vanderLaanComputationFixedPoints1982,Doupnewsimplicialvariable1987,HeringsComputationNashEquilibrium2002}. However, these methods are inherently limited by the lack of smoothness in their underlying formulations, which can significantly hinder convergence efficiency. To overcome this limitation, subsequent research developed differentiable path-following methods, notably the differentiable homotopy method of Herings and Peeters~\cite{Heringsdifferentiablehomotopycompute2001}, the piecewise differentiable global Newton method of Govindan and Wilson~\cite{GovindanglobalNewtonmethod2003}, and the smooth reformulation of Chen and Dang~\cite{Chenreformulationbasedsmoothpathfollowing2016}. Differentiable path-following methods have also been shown to perform well in computing other equilibrium concepts~\cite{Chenextensionquantalresponse2020,Chendifferentiablehomotopymethod2021,caoDifferentiablePathfollowingMethod2023}. In particular, interior-point differentiable path-following methods~\cite{dangInteriorpointPathfollowingAlgorithm2011,dangInteriorPointDifferentiablePathFollowing2022} exhibit significant advantages in terms of numerical stability, convergence behavior, and computational efficiency.
	
	For equilibrium computation in extensive-form games, sequence-form–based path-following methods provide a natural and efficient framework. Introduced by von Stengel~\cite{vonStengelEfficientComputationBehavior1996}, the sequence form yields a compact representation with linear complexity in both variables and constraints, thereby overcoming the dimensional explosion of the normal-form formulation. Building on this representation, a variety of path-following algorithms have been developed, particularly for two-player games~\cite{KollerEfficientComputationEquilibria1996,KollerRepresentationssolutionsgametheoretic1997,VonStengelComputingNormalForm2002,MiltersenComputingquasiperfectequilibrium2010}. For the $n$-player case, Govindan and Wilson~\cite{govindanStructureTheoremsGame2002} developed a piecewise differentiable algorithm for perturbed extensive-form games. More recently, two-stage differentiable path-following methods have been proposed within the sequence-form framework~\cite{houSequenceformDifferentiablePathfollowing2025,houSequenceFormCharacterizationDifferentiable2025}. However, these approaches largely treat the sequence form as a computational tool rather than as a foundation for directly defining and analyzing equilibrium concepts. A preliminary exploration of equivalent definitions of equilibrium in the sequence-form representation has been carried out in~\cite{houCharacterizationComputationNormalForm2026a,zhangComplexityProperEquilibrium2026}, leaving room for a more systematic development. Motivated by this observation, the present paper develops a direct sequence-form treatment of Nash equilibrium for finite $n$-player extensive-form games with perfect recall.
	
	Building on the definition of sequence-form Nash equilibrium introduced in~\cite{houCharacterizationComputationNormalForm2026a}, this paper establishes a rigorously justified sequence-form Nash equilibrium system by proving its equivalence with mixed-strategy Nash equilibrium, and develops an efficient single-stage interior-point differentiable path-following method for its computation. The remaining sections of this paper are organized as follows. Section~\ref{section2} provides a review of Nash equilibrium in extensive-form games and the sequence-form representation.  Section~\ref{section3} rigorously establishes the sequence-form Nash equilibrium system through an equivalence proof with mixed-strategy Nash equilibrium. Section~\ref{section4} presents the proposed single-stage interior-point differentiable path-following methods for computing Nash equilibria. Numerical results and comparative analysis are reported in Section~\ref{section5}, and Section~\ref{section6} concludes the paper.
	
	\section{\large Preliminaries}
	\label{section2}
	\begin{table}[tb!]
		\centering
		\caption{Notation for Extensive-Form Games}
		\begin{tabular}{ll}
			\toprule
			Symbol & Explanation\\
			\midrule
			$N=\{1,2,\ldots,n\}$ & Set of players\\
			$N_c=N\cup\{c\}$ & Set of players and chance player $c$\\
			$a$ & Action taken by a player\\
			$H$ & Set of histories,  $\emptyset\in H$ and $\langle a_1,\ldots,a_L\rangle\in H$ if $\langle a_1,\ldots,a_K\rangle\in H$ and $L<K$\\
			$Z$ & Set of terminal histories\\
			$A(h)=\{a\mid (h,a)\in H\}$ & Set of actions after a nonterminal history $h$\\
			$P(h)$ & Player who takes an action after $h$\\
			$f_{c}(a|h)$ & Probability that chance player $c$ takes action $a$ after $h$\\
			$-i$ & All non-chance players excluding player $i\in N$\\
			$\mathcal{I}_{i}$ & Collection of information partitions of $\{h\in H\mid P(h)=i\}$\\
			$M_{i}=\{1,\ldots,m_{i}\}$ & Set of information partition indices for player $i\in N_c$\\
			$I^{j}_{i}\in\mathcal{I}_{i},j\in M_{i}$ & $j$th information set of player $i\in N_c$, $A(I^j_i)\triangleq A(h)= A(h')$ whenever $h,h'\in I^j_i$\\
			$\succsim_i$ & Preference relation of player $i\in N$ \\
			$u_z^{i}:Z\to\mathbb{R}$ & Payoff function of player $i\in N$\\
			$R_{i}(h)$ & Record of player $i\in N_c$'s experience along $h$\\
			$|C|$ & Cardinality of a finite set $C$\\
			$m_0=\sum_{i\in N}m_i$ & Number of information sets\\
			$n_0=\sum_{i\in N}\sum_{j\in M_i}|A(I^j_i)|$ & Number of actions for non-chance players\\
			\bottomrule
		\end{tabular}
		\label{nfpr-tab-pre1}
	\end{table}
	Following Osborne and Rubinstein~\cite{OsborneCourseGameTheory1994}, an extensive-form game is represented by
	\[
	\Gamma=\langle N, H, P, f_c, \{{\cal I}_i\}_{i\in N}, \{\succsim_i\}_{i\in N}\rangle,
	\]
	where the notation is listed in Table~\ref{nfpr-tab-pre1}. Throughout this paper, we consider finite extensive-form games with perfect recall. Finiteness means that the history set $H$ is finite. Perfect recall requires that, for every player $i\in N_c$, if two histories $h$ and $h'$ belong to the same information set of player $i$, then $R_i(h)=R_i(h')$. The corresponding normal-form representation of $\Gamma$ is =$\Gamma_n=\langle N, S, \sigma^c, \{u^i\}_{i\in N}\rangle$, and the associated notation is provided in Table~\ref{nfpr-tab-pre2}. For each action $a$, define
	\[
	s^i(a)=
	\begin{cases}
		1, & \text{if } s^i(I_i^j)=a \text{ for some $j\in M_i$},\\
		0, & \text{otherwise.}
	\end{cases}
	\]
	Then, for any pure-strategy profile \(s=(s^i:i\in N_c)\), player \(i\)'s payoff is given by
	\begin{equation*}
		\begin{array}{l}
			u^i(s)=\sum\limits_{h=\langle a_1,\ldots,a_L\rangle\in Z}u^i_z(h)\prod\limits_{q=0}^{L-1}s^{P(\langle a_1,\ldots,a_q\rangle)}(a_{q+1}).
		\end{array}
	\end{equation*}
	For a mixed-strategy profile $\sigma=(\sigma^i:i\in N)\in \Xi$, the expected payoff of player $i\in N$ is $u^i(\sigma)=\sum_{s^i\in S^i}\sigma^i(s^i)\,u^i(s^i,\sigma^{-i})$, where
	\begin{equation*}
		\begin{array}{l}
			u^i(s^i,\sigma^{-i})=\sum\limits_{s^{-i}\in S^{-i}}u^i(s^i,s^{-i})\prod\limits_{i_q\in N_c\backslash \{i\}}\sigma^{i_q}(s^{i_q}).
		\end{array}
	\end{equation*}
	
	\begin{definition}\label{nfpe-def-pre1}
		{\em A mixed-strategy profile $\sigma^*$ is called a Nash equilibrium if, for every player $i\in N$, one has $\sigma^{*i}(s^i)=0$ whenever there exists some $\tilde s^i\in S^i$ such that
			\[
			u^i(s^i,\sigma^{*-i})<u^i(\tilde s^i,\sigma^{*-i}).
			\]}
	\end{definition}
	\begin{table}[tb!]
		\centering
		\caption{Notation for Games in Normal Form or Sequence Form}
		\begin{tabular}{ll}
			\toprule
			Symbol & Explanation\\
			\midrule
			$s^i$ & Pure strategy of player $i$\\
			$S=\underset{i\in N_c}{\prod}S^i$ & Set of pure strategy profiles\\
			$\sigma^i$ & Mixed strategy of player $i\in N_c$, probability measure over $S^i$\\
			$\Xi=\underset{i\in N}{\prod}\Xi^i$ & Set of mixed strategy profiles,  $\Xi^i=\{\sigma^i:S^i\to\mathbb{R}_+\mid \sum\limits_{s^i\in S^i}\sigma^{i}(s^i)=1\}$\\
			$\Xi_{++}=\underset{i\in N}{\prod} \Xi^i_{++}$ & Set of strictly positive mixed strategy profiles\\
			$u^i(s)$ & Expected payoff of player $i$ on the pure strategy profile $s$\\
			$\varpi^i$ & Sequence of actions taken by player $i$\\
			$\varpi^i_{I^j_i}$ & Sequence of player $i$ leading to $I^j_i$, $\varpi^i_h=\varpi^i_{I^j_i}$ for any $h\in I^j_i$ \\
			$\varpi^i_{I^j_i}a$ & The extended sequence $\varpi^i_{I^j_i}\cup \{a\}$\\
			${W}=\underset{i\in N_c}\prod{W}^i$ & The collection of sequence profiles, $\emptyset\in{W}^i$\\
			$g^i(\varpi)$ & Expected payoff of player $i$ on the sequence profile $\varpi$\\
			$\gamma^i$ & Realization plan of player $i\in N_c$\\
			$\Lambda=\underset{i\in N}\prod{ \Lambda^i}$ & Set of realization plan profiles\\
			$\Lambda_{++}=\underset{i\in N}\prod{ \Lambda^i_{++}}$ & Set of  strictly positive realization plan profiles\\
			$M_i(\varpi^i)$ & The index set of the information sets for player $i$ with $\varpi^i$ being the sequence\\
			$D_i$ & Set of $(j,a)$ for player $i$ with $M_i(\varpi^i_{I^j_i}a)=\emptyset$\\
			\bottomrule
		\end{tabular}
		\label{nfpr-tab-pre2}
	\end{table}
	The sequence form provides a compact alternative to the normal form by representing strategies through sequences rather than pure strategies. The sequence-form representation of $\Gamma$ is written as $\Gamma_s=\langle N,\{{W}^i\}_{i\in N_c},\gamma^c,\{g^i\}_{i\in N}\rangle$, and the relevant notation is summarized in Table~\ref{nfpr-tab-pre2}. For each player $i\in N_c$, a sequence $\varpi^i$ is the ordered collection of actions taken by player $i$ along some history. For any sequence profile $\varpi\in W$, the payoff function $g^i$ is defined by
	\begin{equation*}
		g^i(\varpi)=
		\begin{cases}
			u_z^i(h), & \text{if }\varpi\text{ is induced by some }h\in Z,\\
			0, & \text{otherwise.}
		\end{cases}
	\end{equation*}
	We say that a sequence profile $\varpi=(\varpi^i:i\in N_c)\in W$ is induced by a history $h\in H$ if $\varpi^i=\varpi_h^i$ for all $i\in N_c$. For each player $i\in N_c$, a realization plan is a function $\gamma^i$ on $W^i$ satisfying $\gamma^i(\varpi^i_\emptyset)=1$ and
	\begin{equation*}
		\begin{array}{l}
			\sum\limits_{a\in A(I_i^j)}\gamma^i(\varpi_{I_i^j}^i a)-\gamma^i(\varpi_{I_i^j}^i)=0,\; j\in M_i,\\
			0\le \gamma^i(\varpi_{I_i^j}^i a),\; j\in M_i,\ a\in A(I_i^j).
		\end{array}
	\end{equation*}
	Given a realization-plan profile $\gamma=(\gamma^i:i\in N_c)\in\Lambda$, the expected payoff of player $i\in N$ is $g^i(\gamma)=\sum_{\varpi^i\in W^i}\gamma^i(\varpi^i)\,g^i(\varpi^i,\gamma^{-i})$, where
	\[\textstyle
	g^i(\varpi^i,\gamma^{-i})
	=
	\sum\limits_{\varpi^{-i}\in W^{-i}}g^i(\varpi^i,\varpi^{-i})
	\prod\limits_{i_q\ne i}\gamma^{i_q}(\varpi^{i_q}).
	\]
	\begin{minipage}{0.40\textwidth}
		\centering
		\includegraphics[width=0.88\textwidth]{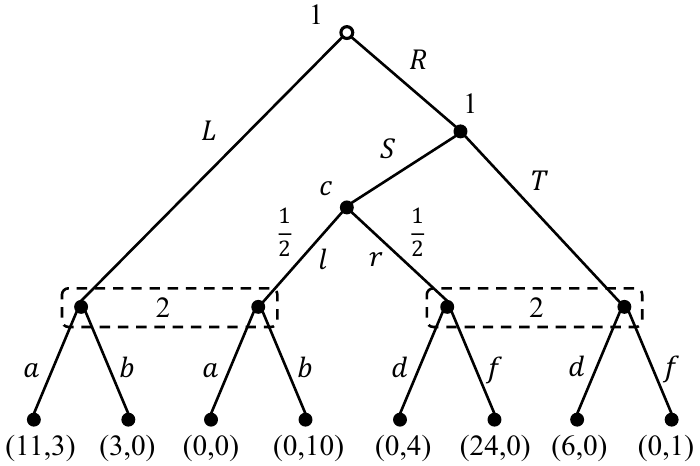}
		\captionof{figure}{An extensive-form game from von Stengel et al.~\cite{VonStengelComputingNormalForm2002}}
		\label{fig:game1}
	\end{minipage}\hfill
	\begin{minipage}{0.56\textwidth}
		\centering
		\captionof{table}{Reduced Normal Form of Fig.~\ref{fig:game1}}
		\label{ch2:tab:exm1nf}
		\fontsize{8.5pt}{11pt}\selectfont
		\begin{tabular}{lccc}
			\toprule
			& \multicolumn{3}{c}{Player 1} \\
			\cmidrule(lr){2-4}
			Player 2 
			& \(s_1^1=\{L\}\) 
			& \(s_2^1=\{R,S\}\) 
			& \(s_3^1=\{R,T\}\) \\
			\midrule
			\(s_1^2=\{a,d\}\) & \((11,3)\) & \((0,2)\)  & \((6,0)\) \\
			\(s_2^2=\{a,f\}\) & \((11,3)\) & \((12,0)\) & \((0,1)\) \\
			\(s_3^2=\{b,d\}\) & \((3,0)\)  & \((0,7)\)  & \((6,0)\) \\
			\(s_4^2=\{b,f\}\) & \((3,0)\)  & \((12,5)\) & \((0,1)\) \\
			\bottomrule
		\end{tabular}
	\end{minipage}
	\begin{example}\label{ch2:exm:extnote0} {\em
			Consider an extensive-form game $\Gamma$ shown in Fig.~\ref{fig:game1}, which is the game in Fig. 1 of von Stengel et al.~\cite{VonStengelComputingNormalForm2002}. The players' information sets are given by \( \mathcal{I}_1=\{I^1_1,I^2_1\} \), \( \mathcal{I}_2=\{I^1_2,I^2_2\} \), and \( \mathcal{I}_c=\{I^1_c\} \), where \( I^1_1=\{\emptyset\} \), \( I^2_1=\{\langle R\rangle\} \), \( I^1_2=\{\langle L\rangle,\langle R,S,l\rangle\} \), \( I^2_2=\{\langle R,S,r\rangle,\langle R,T\rangle\} \), and \( I^1_c=\{\langle R,S\rangle\} \). The pure strategies of the chance player are $s^c_1=\{l\},s^c_2=\{r\}$. The mixed strategy of the chance player is fixed, given by $\sigma^c=(\sigma^c(s^c_1),s^c_1(s^c_2))=(0.5,0.5)$. In the normal-form representation, the effect of chance can be incorporated directly into the payoff computation, thereby simplifying the analysis of pure-strategy profiles of the players. The normal-form representation of the extensive-form game can be summarized in Tab.~\ref{ch2:tab:exm1nf}. The corresponding mixed strategies are probability measures 
			$\sigma^1=(\sigma^1(s^1_1),\sigma^1(s^1_2),\sigma^1(s^1_3))^\top$, $\sigma^2=(\sigma^1(s^2_1),\sigma^1(s^2_2),\sigma^1(s^2_3),\sigma^1(s^2_4))^\top$. Based on Definition~\ref{nfpe-def-pre1} the Nash equilibria of the game can be derived manually. This game exhibits three distinct types of Nash equilibria, classified according to their final expected payoffs $u(\sigma)=(u^1(\sigma),u^2(\sigma))$.
			\begin{itemize}
				\item \textbf{Type A:} $\sigma^1 = (1,0,0)^\top$, $\sigma^2 = (\sigma^2(s^2_1), 1 - \sigma^2(s^2_1),0,0)^\top$ with $\frac{1}{12} \le \sigma^2(s^2_1) \le 1$; payoff $u(\sigma)=(11,3)$.
				\item \textbf{Type B:} $\sigma^1 = (0,\frac{1}{3},\frac{2}{3})^\top$, $\sigma^2 = (0,0,\frac{2}{3},\frac{1}{3})^\top$; payoff $u(\sigma)=(4,\frac{7}{3})$.
				\item \textbf{Type C:} $\sigma^1 = (\frac{5}{14}, \tfrac{3}{14}, \tfrac{3}{7})^\top$, $\sigma^2 = (\tfrac{1}{12}, \tfrac{1}{24},\tfrac{7}{12}, \tfrac{7}{24})^\top$; payoff $u(\sigma)=(4,\frac{3}{2})$.
			\end{itemize}
			The sequence-form representation of the extensive-form game can be summarized in Tab.~\ref{ch2:tab:exm1sf}.
			\begin{table}[!ht]
				\centering
				\caption{Sequence Form of Fig.~\ref{fig:game1}\label{ch2:tab:exm1sf}}
				\begin{tabular}{lccccc}
					\toprule
					& \multicolumn{5}{c}{Player 1 Sequences} \\
					\cmidrule(lr){2-6}
					Player 2 sequences
					& $\emptyset$ 
					& $\varpi^1_{I^1_1}L$ 
					& $\varpi^1_{I^1_1}R$ 
					& $\varpi^1_{I^2_1}S$ 
					& $\varpi^1_{I^2_1}T$ \\
					\midrule
					$\emptyset$           & (0,0) & (0,0)           & (0,0) & (0,0)           & (0,0) \\
					$\varpi^2_{I^1_2}a$   & (0,0) & \textbf{(11,3)} & (0,0) & \textbf{(0,0)}  & (0,0) \\
					$\varpi^2_{I^1_2}b$   & (0,0) & \textbf{(3,0)}  & (0,0) & \textbf{(0,5)}  & (0,0) \\
					$\varpi^2_{I^2_2}d$   & (0,0) & (0,0)           & (0,0) & \textbf{(0,2)}  & \textbf{(6,0)} \\
					$\varpi^2_{I^2_2}f$   & (0,0) & (0,0)           & (0,0) & \textbf{(12,0)} & \textbf{(0,1)} \\
					\bottomrule
				\end{tabular}
			\end{table}
		}
	\end{example}
 \section{Sequence-Form Nash Equilibrium and the Associated Polynomial System}\label{section3}
 Consider an extensive-form game $\Gamma$, with its normal form $\Gamma_n$ and its sequence form $\Gamma_s$. For any player $i\in N_c$ and pure strategy $s^i\in S^i$, define $s^i(\varpi^i)=\prod_{a\in \varpi^i}s^i(a), \, \varpi^i\in W^i$. For any mixed-strategy profile $\sigma\in\Xi$, let $\gamma(\sigma)=(\gamma^i(\sigma^i;\varpi^i): i\in N_c,\ \varpi^i\in W^i)$, where
 \begin{equation}\label{ms2rp}\textstyle
 	\gamma^i(\sigma^i;\varpi^i)=\sum\limits_{s^i\in S^i}s^i(\varpi^i)\sigma^i(s^i),
 	\; i\in N_c,\ \varpi^i\in W^i.
 \end{equation}
 In particular, for any pure strategy $s^i$, one has $\gamma^i(s^i;\varpi^i)=s^i(\varpi^i)$, and hence $\gamma(\sigma)\in\Lambda$. Now define $T=\{(\sigma,\gamma)\mid \sigma\in\Xi,\ \gamma=\gamma(\sigma)\}$. Then the following properties hold. For every $\gamma\in\Lambda$, there exists a mixed-strategy profile $\sigma$ such that $(\sigma,\gamma)\in T$. In particular, if $\gamma\in\Lambda_{++}$, one such profile is $\sigma(\gamma)=(\sigma^i(\gamma^i):i\in N_c)$, with $\sigma^i(\gamma^i;s^i)=\prod_{j\in M_i,\;a\in A(I_i^j),\;s^i(a)=1}\frac{\gamma^i(\varpi_{I_i^j}^i a)}{\gamma^i(\varpi_{I_i^j}^i)}$. Moreover, whenever $(\sigma,\gamma)\in T$, it holds that $u^i(\sigma)=g^i(\gamma)$ for every player $i\in N$.
 
 \subsection{Sequence-Form Nash Equilibrium}
 
Given $\gamma\in\Lambda$, we define the expected payoff, leading by the sequence $\varpi^i\in W^i$, for player $i\in N$ as 
 \[\begin{array}{l}
 	g^i(\gamma;\varpi^i)=\sum\limits_{\tilde{\varpi}^i\in W^i,\varpi^i\subseteq\tilde{\varpi}^i}\gamma^i(\tilde{\varpi}^i)g^i(\tilde{\varpi}^i,\gamma^{-i}),
 \end{array}\]
 and the maximal expected payoff attainable while committing to the sequence $\varpi^i\in W^i$ as
 \begin{equation}\label{ch3:maxpayoff}
 	\textstyle g^i_m(\varpi^i,\gamma^{-i})=\max\limits_{\tilde\gamma^i\in \Lambda^i,\tilde\gamma^i(\varpi^i)=1} g^i(\tilde\gamma^i,\gamma^{-i}).
 \end{equation}
 This notion of maximal expected payoff naturally leads to the definition of best-response sequences, which extends the standard concept of best responses to the sequence form.
 \begin{definition}\label{ch3:def:sfbr}{\em
 		Consider a realization plan profile $\gamma\in \Lambda$. For any $i\in N,j\in M_i,a\in A(I^j_i)$, we refer to $\varpi^i_{I^j_i}a$ as an $I^j_i$-best-response sequence to $\gamma$ if the following equality holds for any $a'\in A(I^j_i)$,
 		\begin{equation}\label{ch3:eqt:best}
 			\max\limits_{\tilde\gamma^i\in \Lambda^i} g^i(\tilde\gamma^i,\gamma^{-i};\varpi^i_{I^j_i}a)\geq\max\limits_{\tilde\gamma^i\in \Lambda^i} g^i(\tilde\gamma^i,\gamma^{-i};\varpi^i_{I^{j}_i}a').
 		\end{equation}
 		We define $\varpi^i_{I^j_i}a$ as a best-response sequence to $\gamma$ if, for any $j_q\in M_i,a_q\in A(I^{j_q}_i)$ with $a_q\in \varpi^i_{I^j_i}a$, $\varpi^i_{I^{j_q}_i}a_q$ qualifies as an $I^{j_q}_i$-best-response sequence to $\gamma$.}
 \end{definition}
 It is important to note that the maximal expected payoff $g^i_m(\varpi^i_{I^j_i}a,\gamma^{-i})$ can be decomposed into two components:
 \[
 \begin{array}{ll}
 	g^i_m(\varpi^i_{I^j_i}a,\gamma^{-i}) &= \max\limits_{\tilde\gamma^i\in \Lambda^i} g^i(\tilde\gamma^i,\gamma^{-i};\varpi^i_{I^j_i}a) \\
 	&\quad + \max\limits_{\tilde\gamma^i\in \Lambda^i, \tilde\gamma^i(\varpi^i_{I^j_i})=1} \sum\limits_{\tilde{\varpi}^i\in W^i, \tilde{\varpi}^i\cap A(I^j_i)=\emptyset} \gamma^i(\tilde{\varpi}^i) g^i(\tilde{\varpi}^i,\gamma^{-i}).
 \end{array}
 \]
 The second component involves only sequences that avoid the information set $I^j_i$, and thus is independent of the particular action $a$ chosen at $A(I^j_i)$. Therefore, Definition~\ref{ch3:def:sfbr} remains valid when the equation~(\ref{ch3:eqt:best}) is equivalently expressed in terms of maximal expected payoffs as
 \[
 g^i_m(\varpi^i_{I^j_i} a, \gamma^{-i}) \ge g^i_m(\varpi^i_{I^j_i} a', \gamma^{-i}).
 \]
 This reformulation highlights that best-response sequences can be characterized entirely through maximal expected payoffs, providing a more concise and convenient representation for subsequent derivations. We now establish that a sequence is a best-response sequence if and only if it attains the maximal expected payoff $g^i_m$ among all sequences. 
 
 Analogous to the standard notion of a pure-strategy best response, a best-response sequence can be equivalently reformulated in terms of its maximal expected payoff across all sequences, as stated in the following proposition.
 
 \begin{theorem}{\em
 		For player $i\in N$, $\varpi^i\in W^i$ is a best-response sequence to a given $\gamma\in\Lambda$ if and only if $g^i_m(\varpi^i,\gamma^{-i})\geq g^i_m(\tilde \varpi^i,\gamma^{-i})$ for any $\tilde \varpi^i\in W^i$.}
 \end{theorem}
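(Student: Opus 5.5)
The plan is to reduce both directions to the local characterization of best-response sequences obtained just above. There, the defining inequality (\ref{ch3:eqt:best}) was shown equivalent to $g^i_m(\varpi^i_{I^j_i}a,\gamma^{-i})\ge g^i_m(\varpi^i_{I^j_i}a',\gamma^{-i})$ for all $a'\in A(I^j_i)$. I will combine this with a simple monotonicity fact along sequences. If $\varpi^i\subseteq\hat\varpi^i$, then every $\tilde\gamma^i\in\Lambda^i$ with $\tilde\gamma^i(\hat\varpi^i)=1$ also satisfies $\tilde\gamma^i(\varpi^i)=1$, because realization plans are nonincreasing along a sequence and bounded by $1$. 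Hence $g^i_m(\hat\varpi^i,\gamma^{-i})\le g^i_m(\varpi^i,\gamma^{-i})$.

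Next I prove a Bellman-type identity: for every $j\in M_i$,
\[ g^i_m(\varpi^i_{I^j_i},\gamma^{-i})=\max_{a\in A(I^j_i)} g^i_m(\varpi^i_{I^j_i}a,\gamma^{-i}). \]
The inequality ``$\ge$'' is the monotonicity fact. For ``$\le$'', take a maximizer $\tilde\gamma^i$ of (\ref{ch3:maxpayoff}) for $\varpi^i_{I^j_i}$, and use the decomposition displayed before the theorem. The payoff splits into a part that avoids $I^j_i$ and a part that is a convex combination, with weights $\tilde\gamma^i(\varpi^i_{I^j_i}a)$, of continuation payoffs after each $a$. Replacing the plan below $I^j_i$ by one that puts full weight on the best action does not decrease the payoff. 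By perfect recall, the resulting plan is still in $\Lambda^i$, and it has value $1$ on $\varpi^i_{I^j_i}a$. Iterating this from the root shows that $g^i_m(\emptyset,\gamma^{-i})=\max_{\tilde\varpi^i} g^i_m(\tilde\varpi^i,\gamma^{-i})$. It also shows that the maximum over all sequences is attained exactly along chains where each step satisfies the local maximality.

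($\Rightarrow$) Suppose $\varpi^i=\varpi^i_{I^j_i}a$ is a best-response sequence. Write its actions in order as $a_1,\dots,a_L$, taken at information sets $I^{j_1}_i,\dots,I^{j_L}_i$, with prefixes $\varpi^i_{I^{j_q}_i}$. Each $\varpi^i_{I^{j_q}_i}a_q$ is an $I^{j_q}_i$-best-response sequence. By the local reformulation and the identity, $g^i_m(\varpi^i_{I^{j_q}_i}a_q,\gamma^{-i})=g^i_m(\varpi^i_{I^{j_q}_i},\gamma^{-i})$. Perfect recall gives $\varpi^i_{I^{j_{q+1}}_i}=\varpi^i_{I^{j_q}_i}a_q$, so these equalities chain together. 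The result is $g^i_m(\varpi^i,\gamma^{-i})=g^i_m(\emptyset,\gamma^{-i})$, which is the maximum over all sequences by the previous paragraph. The empty sequence is handled trivially, since it is vacuously a best-response sequence and attains the maximum.

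($\Leftarrow$) Suppose $\varpi^i$ attains the global maximum $g^i_m(\emptyset,\gamma^{-i})$. By monotonicity, every prefix $\varpi^i_{I^{j_q}_i}$ and every extension $\varpi^i_{I^{j_q}_i}a_q$ along $\varpi^i$ also attains it. Therefore, for each $q$, $g^i_m(\varpi^i_{I^{j_q}_i}a_q,\gamma^{-i})=g^i_m(\varpi^i_{I^{j_q}_i},\gamma^{-i})\ge g^i_m(\varpi^i_{I^{j_q}_i}a',\gamma^{-i})$ for all $a'$, using the identity again. This is precisely the local best-response condition at every information set along the sequence, so $\varpi^i$ is a best-response sequence.

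The main obstacle is the ``$\le$'' half of the identity: showing that concentrating the plan on the best action at $I^j_i$ stays feasible in $\Lambda^i$ and does not lower the payoff. I would handle it by writing $g^i(\tilde\gamma^i,\gamma^{-i})$ as the part avoiding $I^j_i$ plus $\sum_a \tilde\gamma^i(\varpi^i_{I^j_i}a)\,c_a$. Here $c_a$ is the conditional continuation value, and perfect recall makes the subtrees following different actions $a$ disjoint in the sequence set. I would then rescale the chosen subtree plan by $1/\tilde\gamma^i(\varpi^i_{I^j_i}a)$ and set the others to zero; the edge case $\tilde\gamma^i(\varpi^i_{I^j_i}a)=0$ is handled by selecting an arbitrary optimal continuation.
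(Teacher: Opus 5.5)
Your proposal is correct and follows the paper's route. The paper likewise chains the local equality $g^i_m(\varpi^i_{I^j_i}a,\gamma^{-i})=g^i_m(\varpi^i_{I^j_i},\gamma^{-i})$ along the sequence down to $g^i_m(\emptyset,\gamma^{-i})$, which is the global maximum, while asserting that local equality without proof and calling the converse immediate. Your Bellman-type identity, obtained by concentrating the plan on the best action at $I^j_i$, and your monotonicity argument supply the justification for both of those steps.
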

 \begin{proof}
 	The sufficiency direction is immediate from the definition. We now show necessity. For any $i\in N,j\in M_i,a\in A(I^j_i)$, suppose that $\varpi^i_{I^j_i}a$ is an $I^j_i$-best-response sequence to $\gamma$, then $g^i_m(\varpi^i_{I^j_i}a,\gamma^{-i})= g^i_m(\varpi^i_{I^j_i},\gamma^{-i})$. Next, consider any $\varpi^i\in W^i$ that is a best-response sequence to a given $\gamma\in\Lambda$. By recursively invoking the local equality above along the sequence, one arrives at $g^i_m(\varpi^i,\gamma^{-i})= g^i_m(\emptyset,\gamma^{-i})$. This identity immediately yields $g^i_m(\varpi^i,\gamma^{-i})\geq g^i_m(\tilde \varpi^i,\gamma^{-i})$ for any $\tilde \varpi^i\in W^i$, completing the proof.
 \end{proof}
 This proposition naturally leads to the definition of a sequence-form Nash equilibrium.
 \begin{definition}\label{ch3:def:sfne}{\em
 		Let $\Gamma$ be an extensive form game.  A realization plan profile $\gamma^*\in \Lambda$ is a sequence-form Nash equilibrium if, for any player $i\in N$, $\gamma^{*i}(\varpi^i)=0$ whenever $g^i_m(\varpi^i,\gamma^{*-i})<g^i_m(\tilde \varpi^i,\gamma^{*-i})$ for some $\tilde \varpi^i\in W^i$.}
 \end{definition}

 In order to prove the equivalence between mixed-strategy Nash equilibria and sequence-form Nash equilibria, we first introduce a lemma from Hou et al.~\cite{houSequenceFormCharacterizationDifferentiable2025}. This lemma captures the relationship between best-response sequences and the optimal pure strategies.
 \begin{lemma}\label{ch3:lem:bestres}{\em
 		For $(\sigma,\gamma)\in T$ and player $i$, the following statements are equivalent:
 		\begin{enumerate}
 			\item $u^i(s^i,\sigma^{-i})\geq u^i(\tilde s^i,\sigma^{-i})$ holds for any $\tilde s^i\in S^i$.
 			\item For any $j\in M_i,a\in A(I^j_i)$ with $s^i(\varpi^i_{I^j_i}a)=1$, $\varpi^i_{I^j_i}a$ is a best-response sequence to $\gamma$.
 	\end{enumerate} }
 \end{lemma}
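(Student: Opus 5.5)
The plan is to link both conditions to a single quantity, the optimal value of player $i$'s problem once the opponents' plan $\gamma^{-i}$ is fixed. The key identity is that, for $(\sigma,\gamma)\in T$ and any pure strategy $s^i$, $u^i(s^i,\sigma^{-i})=g^i(\gamma(s^i),\gamma^{-i})$. Here $\gamma(s^i)$ is the $0$–$1$ realization plan $\varpi^i\mapsto s^i(\varpi^i)$. The identity follows from the multilinearity of $u^i$ and $g^i$ together with the relation $u^i(\sigma)=g^i(\gamma)$ stated just before the subsection. Since $\Lambda^i$ is the convex hull of the pure realization plans $\{\gamma(s^i):s^i\in S^i\}$, we get
\[
\max_{\tilde s^i\in S^i}u^i(\tilde s^i,\sigma^{-i})=\max_{\tilde\gamma^i\in\Lambda^i}g^i(\tilde\gamma^i,\gamma^{-i})=g^i_m(\emptyset,\gamma^{-i}).
\]
Statement 1 therefore says exactly that $g^i(\gamma(s^i),\gamma^{-i})=g^i_m(\emptyset,\gamma^{-i})$, i.e.\ that $\gamma(s^i)$ is optimal in the sequence-form problem.

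Next, I would decompose $g^i(\tilde\gamma^i,\gamma^{-i})$ recursively along player $i$'s information sets. Perfect recall makes the sequences of player $i$ form a tree rooted at $\emptyset$. For each $\varpi^i$, the contributions of sequences extending $\varpi^i$ scale with $\tilde\gamma^i(\varpi^i)$. This gives a dynamic-programming (Bellman) characterization of $g^i_m$:
\[
g^i_m(\varpi^i_{I^j_i},\gamma^{-i})=\max_{a\in A(I^j_i)}g^i_m(\varpi^i_{I^j_i}a,\gamma^{-i}),
\]
with $g^i_m(\varpi^i,\gamma^{-i})$ written as a fixed part plus the sum of the local optimal values over the information sets in $M_i(\varpi^i)$. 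Combined with the equivalent form of (\ref{ch3:eqt:best}) in terms of $g^i_m$ noted after Definition~\ref{ch3:def:sfbr}, this yields the following: $\varpi^i_{I^j_i}a$ is an $I^j_i$-best-response sequence iff $a$ attains the maximum in the Bellman recursion at $I^j_i$.

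With this in hand, both directions follow the standard argument that a plan is optimal iff it is optimal at every reached node.

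For (1)$\Rightarrow$(2), suppose some sequence $\varpi^i_{I^j_i}a$ with $s^i(\varpi^i_{I^j_i}a)=1$ fails to be a best-response sequence. Then some prefix action $a_q$ at an information set $I^{j_q}_i$ that $s^i$ reaches is not a maximizer. I would modify $s^i$ at $I^{j_q}_i$, and at all information sets below it, to follow an optimal continuation. This strictly increases $g^i$, because $I^{j_q}_i$ is reached by $s^i$, i.e.\ $s^i(\varpi^i_{I^{j_q}_i})=1$. That contradicts (1).

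For (2)$\Rightarrow$(1), I would argue by backward induction on the sequence tree. Take the information sets reached by $s^i$ in order from the deepest upward. At each one, $s^i$ chooses a maximizing action, so $g^i$ evaluated along $s^i$ from that set equals the local optimum. Climbing to the root gives $g^i(\gamma(s^i),\gamma^{-i})=g^i_m(\emptyset,\gamma^{-i})$. Information sets that $s^i$ does not reach do not affect the payoff, which is why condition (2) only needs to concern sequences with $s^i(\cdot)=1$.

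The main obstacle is establishing the recursive decomposition of $g^i_m$ rigorously. It requires showing that the maximization in (\ref{ch3:maxpayoff}) separates across the disjoint subtrees below the different information sets in $M_i(\varpi^i)$, and this separation is exactly where perfect recall is used. I would prove it by induction on the depth of sequences, using the linear constraints defining $\Lambda^i$. Each subtree's plan can be chosen independently once it is renormalized by the parent's realization probability.
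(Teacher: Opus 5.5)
The paper does not prove this lemma; it quotes it from Hou et al. So there is no in-paper argument to match, and your proposal is a correct, self-contained proof of it.

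Its two ingredients are facts the paper uses elsewhere without justification. Your Bellman identity is the local equality $g^i_m(\varpi^i_{I^j_i}a,\gamma^{-i})=g^i_m(\varpi^i_{I^j_i},\gamma^{-i})$ that is ``recursively invoked'' in the proof of the theorem preceding Definition~\ref{ch3:def:sfne}. Your separability claim is what underlies the two-component decomposition of $g^i_m$ displayed after Definition~\ref{ch3:def:sfbr}. So your plan also supplies the justification the paper omits there.

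Two parts can be shortened.

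\textbf{The Bellman identity needs no induction.} The inequality $\tilde\gamma^i(\varpi^i)\le 1$ is valid on $\Lambda^i=\mathrm{conv}\{\gamma(s^i):s^i\in S^i\}$. Hence the feasible set in \eqref{ch3:maxpayoff} for $\varpi^i_{I^j_i}$ is a face, i.e.\ the convex hull of the pure plans with $s^i(\varpi^i_{I^j_i})=1$, and each of these picks exactly one $a\in A(I^j_i)$.

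\textbf{(1)$\Rightarrow$(2) needs neither the deviation argument nor separability.} $g^i_m$ is nonincreasing along extensions of a sequence, because the feasible set shrinks. If $s^i(\varpi^i)=1$, then $\gamma(s^i)$ is feasible for $g^i_m(\varpi^i,\gamma^{-i})$. Your first step then gives $g^i_m(\varpi^i,\gamma^{-i})=g^i_m(\emptyset,\gamma^{-i})$. Hence $g^i_m(\varpi^i_{I^{j_q}_i}a_q,\gamma^{-i})\ge g^i_m(\emptyset,\gamma^{-i})\ge g^i_m(\varpi^i_{I^{j_q}_i}a',\gamma^{-i})$ for every prefix $\varpi^i_{I^{j_q}_i}a_q$ of $\varpi^i$ and every $a'$.

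Separability is genuinely needed only for (2)$\Rightarrow$(1). There it assembles local optimality on all branches reached by $s^i$ into optimality of the single plan $\gamma(s^i)$. It is cleanest at the level of pure strategies, since $S^i$ is a product over information sets. By perfect recall, when $s^i(\varpi^i_{I^j_i})=1$ the payoff $g^i(\gamma(s^i),\gamma^{-i})$ splits into a term depending only on choices outside the subtree of $I^j_i$ plus one depending only on choices inside it. This avoids any renormalization by (possibly zero) parent probabilities.

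One caveat you should make explicit: your proof takes the $g^i_m$ form of \eqref{ch3:eqt:best} as the definition, and this is essential. Read literally, with the maximum over all of $\Lambda^i$, the left-hand side of \eqref{ch3:eqt:best} equals $\max\{0,c\}$. Here $c$ is the same maximum taken over plans with $\tilde\gamma^i(\varpi^i_{I^j_i}a)=1$; the $0$ arises because one may put zero weight on $\varpi^i_{I^j_i}a$. Now shift player $i$'s payoffs by a large negative constant. This changes no best response in $\Gamma_n$, but makes every sequence qualify. So (2)$\Rightarrow$(1) would fail under the literal reading; with the intended $g^i_m$ definition your argument goes through.
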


 \begin{theorem}\label{ch3:thm:neeq}{\em
 		For $(\sigma^*,\gamma^*)\in T$, $\sigma^*$ is a Nash equilibrium if and only if $\gamma^*$ is a sequence-form Nash equilibrium.}
 \end{theorem}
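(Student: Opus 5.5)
The plan is to prove both directions by going through Lemma~\ref{ch3:lem:bestres} and the preceding theorem, which says that best-response sequences are exactly the maximizers of $g^i_m(\cdot,\gamma^{-i})$ over $W^i$. Two facts will be used throughout. First, since $(\sigma^*,\gamma^*)\in T$, we have $\gamma^{*i}(\varpi^i)=\sum_{s^i}s^i(\varpi^i)\sigma^{*i}(s^i)$, so $\gamma^{*i}(\varpi^i)>0$ if and only if some $s^i$ with $\sigma^{*i}(s^i)>0$ satisfies $s^i(\varpi^i)=1$. Second, Definition~\ref{ch3:def:sfne} can be restated as follows: $\gamma^*$ is a sequence-form Nash equilibrium if and only if every $\varpi^i$ with $\gamma^{*i}(\varpi^i)>0$ is a best-response sequence to $\gamma^*$, by the theorem just proved. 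The empty sequence needs a remark: it trivially attains the maximum of $g^i_m$, because every feasible $\tilde\gamma^i$ has $\tilde\gamma^i(\emptyset)=1$.

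\emph{Necessity.} Let $\sigma^*$ be a Nash equilibrium and take $\varpi^i$ with $\gamma^{*i}(\varpi^i)>0$. Choose $s^i$ with $\sigma^{*i}(s^i)>0$ and $s^i(\varpi^i)=1$. By Definition~\ref{nfpe-def-pre1}, $s^i$ is a best response, so it satisfies statement~1 of Lemma~\ref{ch3:lem:bestres}. The lemma then gives that every sequence $\varpi^i_{I^j_i}a$ with $s^i(\varpi^i_{I^j_i}a)=1$ is a best-response sequence. If $\varpi^i\neq\emptyset$, write $\varpi^i=\varpi^i_{I^j_i}a$; since $s^i(\varpi^i)=1$, it is a best-response sequence. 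So $\varpi^i$ maximizes $g^i_m$, and no such $\varpi^i$ violates Definition~\ref{ch3:def:sfne}.

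\emph{Sufficiency.} Let $\gamma^*$ be a sequence-form Nash equilibrium and take $s^i$ with $\sigma^{*i}(s^i)>0$. I want to verify statement~2 of Lemma~\ref{ch3:lem:bestres} for $s^i$. Take any $\varpi^i_{I^j_i}a$ with $s^i(\varpi^i_{I^j_i}a)=1$. Then $\gamma^{*i}(\varpi^i_{I^j_i}a)\ge s^i(\varpi^i_{I^j_i}a)\sigma^{*i}(s^i)>0$. By the equilibrium condition, this sequence maximizes $g^i_m$, so by the preceding theorem it is a best-response sequence. The lemma then gives $u^i(s^i,\sigma^{*-i})\ge u^i(\tilde s^i,\sigma^{*-i})$ for all $\tilde s^i$. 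Hence every strategy in the support of $\sigma^{*i}$ is optimal, which is exactly Definition~\ref{nfpe-def-pre1}.

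The main obstacle is bookkeeping rather than substance. I must check that $g^i_m$ depends only on $\gamma^{*-i}$, which it does by its definition~(\ref{ch3:maxpayoff}). I must also check that Lemma~\ref{ch3:lem:bestres} applies, since best-response sequences are defined relative to $\gamma^*$ while the lemma requires $(\sigma^*,\gamma^*)\in T$ for the opponents' profile $\sigma^{*-i}$; this holds by hypothesis. With these in place, both directions reduce to the support characterization.
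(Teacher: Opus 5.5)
Your proof is correct and follows the paper's argument essentially step for step. Both directions go through the support identity $\gamma^{*i}(\varpi^i)=\sum_{s^i}s^i(\varpi^i)\sigma^{*i}(s^i)$ and Lemma~\ref{ch3:lem:bestres}, with the preceding theorem converting between best-response sequences and maximizers of $g^i_m$; your explicit handling of the empty sequence fills in a case the paper leaves implicit.
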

 \begin{proof}
 	\textbf{(Sufficiency)}: Suppose that $\gamma^*$ is a sequence-form Nash equilibrium, then any sequence $\varpi^i\in W^i$ of play $i$ with $\gamma^{*i}(\varpi^i)>0$ is a best-response sequence to $\gamma^*$. Consider $s^i\in S^i$ such that $\sigma^{*i}(s^i)>0$. Then, for any $\varpi^i$ with $s^i(\varpi^i) = 1$, we have $\gamma^{*i}(\varpi^i)=\sum_{\tilde s^i\in S^i}\tilde s^i(\varpi^i)\sigma^{*i}(\tilde s^i)\geq s^i(\varpi^i)\sigma^{*i}(s^i)>0$. implying that $\varpi^i$ is a best-response sequence to $\gamma^*$. By Lemma~\ref{ch3:lem:bestres}, it follows that $u^i(s^i, \sigma^{*-i}) \ge u^i(\tilde{s}^i, \sigma^{*-i})$ holds for any $\tilde{s}^i \in S^i$, and therefore $\sigma^*$ is a Nash equilibrium.
 	
 	\textbf{(Necessity)}: Assume that $\sigma^*$ is a Nash equilibrium. We show that any sequence $\varpi^i\in W^i$ satisfying $\gamma^{*i}(\varpi^i)>0$ is a best-response sequence to $\gamma^*$. Since $\gamma^{*i}(\varpi^i)=\sum_{s^i\in S^i}s^i(\varpi^i)\sigma^{*i}(s^i)>0$, there exists $s^i\in S^i$ such that $s^i(\varpi^i)=1$ and $\sigma^{*i}(s^i)>0$. Consequently, $u^i(s^i,\sigma^{*-i})\geq u^i(\tilde s^i,\sigma^{*-i})$ for all $\tilde s^i\in S^i$. It then follows from Lemma~\ref{ch3:lem:bestres} that $\varpi^i$ is a best-response sequence to $\gamma^*$. Thus $\gamma^*$ is a sequence-form Nash equilibrium.
 \end{proof}
 \subsection{A Polynomial Equilibrium System}
 To make sequence-form Nash equilibrium computationally tractable, we derive an equivalent system that formally captures the equilibrium conditions. For any fixed profile of the opponents' realization plans $\hat\gamma^{-i}$, player $i$'s best response can be obtained by solving the following linear optimization problem,
 \begin{equation}\label{ch3:opt:ne}
 	\begin{array}{rl}
 		\max\limits_{\gamma^i} &\sum\limits_{j\in M_i}\sum\limits_{a\in A(I^j_i)}\gamma^i(\varpi^i_{I^j_i}a)g^i(\varpi^i_{I^j_i}a,\hat\gamma^{-i})\\
 		\text{s.t.}&\sum\limits_{a\in A(I^j_i)}\gamma^i(\varpi^i_{I^j_i}a)-\gamma^i(\varpi^i_{I^j_i})=0,\;j\in M_i,\\
 		&0\le \gamma^i(\varpi^i_{I^j_i}a),\; (j,a)\in D_i.
 	\end{array}
 \end{equation}
 Two clarifications regarding Problem~(\ref{ch3:opt:ne}) are necessary. Firstly, the payoff associated with the empty sequence is omitted, since it remains constant with respect to $\gamma^i$ and therefore does not affect optimality. Secondly, we exclude redundant inequalities in the constraints that arise from the recursive equation. 
 
 By applying the Karush–Kuhn–Tucker (KKT) optimality conditions to Problem~(\ref{ch3:opt:ne}) for each player and subsequently imposing the consistency condition $\gamma=\hat\gamma$, we obtain the following theorem.
 \begin{theorem}\label{ch3:thm:ne} {\em $\gamma^*\in\Lambda$ constitutes a sequence-form Nash equilibrium if and only if there exists a pair $(\lambda^*,\nu^*)$ alongside $\gamma^*$ that satisfies the system,
 		\begin{equation}
 			\label{ch3:eqt:ne}
 			\begin{array}{l}
 				g^i(\varpi^i_{I^j_i}a,\gamma^{-i})+ \lambda^i(\varpi^i_{I^j_i}a)-\nu^i_{I^j_i} = 0,\;i\in N,(j,a)\in D_i,\\
 				g^i(\varpi^i_{I^j_i}a,\gamma^{-i}) -\nu^i_{I^j_i} + \zeta^i_{I^j_i}(a) = 0,\;i\in N,(j,a)\notin D_i,\\
 				\sum\limits_{a\in A(I^j_i)}\gamma^i(\varpi^i_{I^j_i}a)-\gamma^i(\varpi^i_{I^j_i})=0,\;i\in N,j\in M_i,\\
 				\gamma^i(\varpi^i_{I^j_i}a)\lambda^i(\varpi^i_{I^j_i}a)=0,\;0<\gamma^i(\varpi^i_{I^j_i}a),\;0<\lambda^i(\varpi^i_{I^j_i}a),\; i\in N,(j,a)\in D_i,
 			\end{array}
 		\end{equation}
 		where $\zeta^i_{I^j_i}(a)=\sum_{{j_q}\in M_i(\varpi^i_{I^j_i}a)}\nu^i_{I^{j_q}_i}$.
 	}
 \end{theorem}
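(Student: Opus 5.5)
The plan is to reduce the statement to linear-programming duality for each player's best-response problem~(\ref{ch3:opt:ne}), using Theorem~\ref{ch3:thm:neeq} to move between sequence-form and mixed-strategy equilibrium. I read the sign conditions in the last line of~(\ref{ch3:eqt:ne}) as the usual complementarity conditions $\gamma^i(\varpi^i_{I^j_i}a)\ge 0$, $\lambda^i(\varpi^i_{I^j_i}a)\ge 0$, with product zero.

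\textbf{Step 1 (equilibrium as simultaneous best responses).} Fix $\gamma^*\in\Lambda$ and choose $\sigma^*$ with $(\sigma^*,\gamma^*)\in T$; such a $\sigma^*$ exists by the properties listed at the start of Section~\ref{section3}. By Theorem~\ref{ch3:thm:neeq}, $\gamma^*$ is a sequence-form Nash equilibrium iff $\sigma^*$ is a Nash equilibrium. The latter holds iff, for each $i$, $\sigma^{*i}$ maximizes $u^i(\cdot,\sigma^{*-i})$ over $\Xi^i$.

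Now use $u^i(\sigma)=g^i(\gamma)$ whenever $(\sigma,\gamma)\in T$, together with the fact that every $\tilde\gamma^i\in\Lambda^i$ is $\gamma^i(\tilde\sigma^i)$ for some $\tilde\sigma^i$. These give
\[
\max_{\tilde\sigma^i}u^i(\tilde\sigma^i,\sigma^{*-i})=\max_{\tilde\gamma^i\in\Lambda^i}g^i(\tilde\gamma^i,\gamma^{*-i}).
\]
Hence $\sigma^*$ is a Nash equilibrium iff each $\gamma^{*i}$ maximizes $g^i(\cdot,\gamma^{*-i})$ over $\Lambda^i$. Since $g^i(\tilde\gamma^i,\gamma^{*-i})=\sum_{\varpi^i}\tilde\gamma^i(\varpi^i)g^i(\varpi^i,\gamma^{*-i})$ and $\tilde\gamma^i(\emptyset)=1$ is fixed, this is exactly optimality of $\gamma^{*i}$ in Problem~(\ref{ch3:opt:ne}) with $\hat\gamma^{-i}=\gamma^{*-i}$.

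\textbf{Step 2 (the dropped inequalities are redundant).} I will check that the feasible set of~(\ref{ch3:opt:ne}) equals $\Lambda^i$, i.e.\ that the nonnegativity constraints omitted for $(j,a)\notin D_i$ are implied by the others. The argument is an induction from the leaves of player $i$'s sequence tree. If $(j,a)\notin D_i$, then $\varpi^i_{I^j_i}a$ is the parent sequence of the information sets $I^{j_q}_i$ with $j_q\in M_i(\varpi^i_{I^j_i}a)$. The equality constraint at any such $I^{j_q}_i$ writes $\gamma^i(\varpi^i_{I^j_i}a)$ as a sum of child values, which are nonnegative by induction.

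\textbf{Step 3 (KKT).} Problem~(\ref{ch3:opt:ne}) is a linear program, so the KKT conditions are necessary and sufficient for optimality with no constraint qualification needed. Introduce a multiplier $\nu^i_{I^j_i}$ for each equality constraint and $\lambda^i(\varpi^i_{I^j_i}a)\ge 0$ for each nonnegativity constraint with $(j,a)\in D_i$. Differentiating the Lagrangian with respect to $\gamma^i(\varpi^i_{I^j_i}a)$ gives two kinds of terms:
\begin{itemize}
\item the objective contributes $g^i(\varpi^i_{I^j_i}a,\hat\gamma^{-i})$;
\item the constraint at $I^j_i$ contributes $-\nu^i_{I^j_i}$, and each constraint at $I^{j_q}_i$ with $j_q\in M_i(\varpi^i_{I^j_i}a)$, where this variable plays the role of $\gamma^i(\varpi^i_{I^{j_q}_i})$ with coefficient $-1$, contributes $+\nu^i_{I^{j_q}_i}$.
\end{itemize}
With the sign convention of~(\ref{ch3:eqt:ne}), the second kind of term sums to $\zeta^i_{I^j_i}(a)$. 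For $(j,a)\in D_i$ this set is empty and the $\lambda$ term appears instead, which yields the first two lines of~(\ref{ch3:eqt:ne}). Primal feasibility and complementary slackness give the remaining lines. Imposing $\hat\gamma=\gamma^*$ and combining with Step 1 proves both directions.

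The main obstacle is Step 1. The rest is bookkeeping, but Step 1 needs care in transferring optimality from the mixed-strategy space to the realization-plan space: it requires surjectivity of $\sigma^i\mapsto\gamma^i(\sigma^i)$ onto $\Lambda^i$ together with payoff equivalence for mixed deviations of a single player, not just at $\sigma^*$. I would verify the latter directly from the multilinearity of both $u^i$ and $g^i$ in player $i$'s own variables.
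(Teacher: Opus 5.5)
Your proof is correct, but it takes a different route from the paper's.

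\textbf{The paper's route.} The paper never leaves the sequence form.
\begin{itemize}
\item For sufficiency, it treats the KKT point as an optimum of Problem~(\ref{ch3:opt:ne}). It then shows by contradiction that every positive-weight sequence $\varpi^i_{I^j_i}a$ with $(j,a)\in D_i$ is a best-response sequence. If some action $a_q$ on it failed the $I^{j_q}_i$-best-response test, the perturbed plan $\gamma^{*i}+\epsilon(\tilde\gamma^i-\tilde\gamma^{*i})$ would raise $g^i$.
\item For necessity, it splits $\max_{\gamma^i}g^i(\gamma^i,\gamma^{*-i})$ over the root information sets. It uses that positive-weight sequences attain the maximal continuation payoff, concludes that $\gamma^{*i}$ is optimal, and then invokes KKT.
\end{itemize}

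\textbf{Your route.} You obtain the same intermediate fact differently: $\gamma^*$ is a sequence-form equilibrium iff each $\gamma^{*i}$ maximizes $g^i(\cdot,\gamma^{*-i})$ over $\Lambda^i$. You get it by combining Theorem~\ref{ch3:thm:neeq}, surjectivity of $\sigma^i\mapsto\gamma^i(\sigma^i)$, and payoff equivalence on $T$.

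Your worry about unilateral deviations is already covered by the paper's stated facts. Since $\gamma(\cdot)$ acts componentwise, $((\tilde\sigma^i,\sigma^{*-i}),(\gamma^i(\tilde\sigma^i),\gamma^{*-i}))\in T$, so $u^i=g^i$ on $T$ applies directly. No separate multilinearity check is needed.

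\textbf{What each approach buys.} Your argument is shorter, and every step rests on results already established. It bypasses the paper's more delicate bookkeeping: the renormalised plan $\tilde\gamma^{*i}$, the check that the perturbed plan stays in $\Lambda^i$, and the root-subtree decomposition. The cost is that the theorem then depends on the normal form, on Lemma~\ref{ch3:lem:bestres} imported from earlier work, and on realization equivalence. The paper's argument is intrinsic to the sequence form, which matches its stated aim of defining equilibrium directly there.

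\textbf{Two points you handled correctly.} Your Step~2 (redundancy of the omitted nonnegativity constraints) is needed to identify the feasible set of Problem~(\ref{ch3:opt:ne}) with $\Lambda^i$; the paper only mentions this in passing. Your reading of the sign conditions as $\gamma^i(\varpi^i_{I^j_i}a)\ge 0$, $\lambda^i(\varpi^i_{I^j_i}a)\ge 0$ is the only workable one. With the strict inequalities as printed, $\gamma\lambda=0$ could never hold, and the paper's own proof tacitly uses the weak form.
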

 \begin{proof} \textbf{(Sufficiency)}: Assume that there exists a pair $(\lambda^*,\nu^*)$ satisfying System~\eqref{ch3:eqt:ne} together with $\gamma^*$. We show that for any $i\in N, (j,a)\in D_i$ with $\gamma^{*i}(\varpi^i_{I^j_i}a)>0$, $\varpi^i_{I^j_i}a$ is a best-response sequence to $\gamma^*$. Suppose, to the contrary, that $\varpi^i_{I^j_i}a$ is not a best-response sequence to $\gamma^*$. Then there exists some $j_q\in M_i$ and $a_q\in A(I^{j_q}_i)$ with $a_q\in \varpi^i_{I^j_i}a$ such that $\varpi^i_{I^{j_q}_i}a_q$ fails to be an $I^{j_q}_i$-best-response sequence to $\gamma^*$. Consequently, there exists an alternative action $a'\in A(I^{j_q}_i)$ satisfying
 	\begin{equation}\label{ch3:eqt:thmpoly}\begin{array}{l}
 			\max\limits_{\gamma^i\in \Lambda^i} g^i(\gamma^i,\gamma^{*-i};\varpi^i_{I^{j_q}_i}a_q)<\max\limits_{\gamma^i\in \Lambda^i} g^i(\gamma^i,\gamma^{*-i};\varpi^i_{I^{j_q}_i}a').
 	\end{array}\end{equation}
 	Let $\tilde\gamma^i\in \arg\max_{\gamma^i\in \Lambda^i} g^i(\gamma^i,\gamma^{*-i};\varpi^i_{I^{j_q}_i}a')$ and define $\tilde\gamma^{*i}$ such that $\tilde\gamma^{*i}(\varpi^i)=\tilde\gamma^{i}(\varpi^i)$ for all $\varpi^i\in W^i$ with $\varpi^i_{I^{j_q}_i}\nsubseteq\varpi^i$, while $\tilde\gamma^{*i}(\varpi^i)=\gamma^{*i}(\varpi^i)/\gamma^{*i}(\varpi^i_{I^j_i}a)$ whenever $\varpi^i_{I^{j_q}_i}\subseteq\varpi^i$. It follows from (\ref{ch3:eqt:thmpoly}) that $g^i(\tilde\gamma^i,\gamma^{*-i})>g^i(\tilde\gamma^{*i},\gamma^{*-i})$. For sufficiently small $\epsilon>0$, define $\gamma^i=\gamma^{*i}+\epsilon(\tilde\gamma^i-\tilde\gamma^{*i})$. Then $\gamma^i\in\Lambda$ and $g^i(\gamma^i,\gamma^{*-i})>g^i(\gamma^{*i},\gamma^{*-i})$, which contradicts the optimality of $\gamma^{*i}$ in Problem~(\ref{ch3:opt:ne}). Therefore, the claim follows.
 	
 	\textbf{(Necessity)}: Conversely, suppose that $\gamma^*$ is a sequence-form Nash equilibrium. Then, for any player $i\in N$, we have
 	\[\begin{array}{ll}
 		\max\limits_{\gamma^i\in \Lambda^i} g^i(\gamma^i,\gamma^{*-i})&=\sum\limits_{j\in M_i(\emptyset)}\max\limits_{\gamma^i\in \Lambda^i}\sum\limits_{a\in A(I^j_i)}\gamma^i(\varpi^i_{I^j_i}a)g^i(\gamma^i,\gamma^{*-i};\varpi^i_{I^j_i}a)+g^i(\emptyset,\gamma^{*-i})\\
 		&=\sum\limits_{j\in M_i(\emptyset)}\sum\limits_{a\in A(I^j_i)}\gamma^{*i}(\varpi^i_{I^j_i}a)\max\limits_{\gamma^i\in \Lambda^i} g^i(\gamma^i,\gamma^{*-i};\varpi^i_{I^j_i}a)+g^i(\emptyset,\gamma^{*-i})\\
 		&=\sum\limits_{\varpi^i\in W^i} \gamma^{*i}(\varpi^i)g^i(\varpi^i,\gamma^{*-i})\\
 		&=g^i(\gamma^{*i},\gamma^{*-i}).
 	\end{array}
 	\]
 	Hence, $\gamma^{*i}$ attains the maximal payoff against $\gamma^{*-i}$, which completes the proof.
 \end{proof}
 \section{Interior-Point Differentiable Path-Following Methods}
 \label{section4}
 This section utilizes the sequence-form necessary and sufficient condition in Theorem~\ref{ch3:thm:ne} to develop two interior-point differentiable path-following methods for finding Nash equilibria. These methods generate a smooth path to a Nash equilibrium by constructing artificial games whose equilibria vary continuously with an extra variable. Specifically, these two methods introduce, respectively, different logarithmic barrier terms into the payoff function. These barrier terms ensure that the strategy variables remain in the interior of the feasible region, thereby enhancing the smoothness and numerical stability of the path.
 
 \subsection{A Logarithmic-Barrier Smooth Path}
 \label{ch3:sec:logit}
 Let $\gamma^0=(\gamma^{0i}(\varpi^i):i\in N,\varpi^i\in W^i)$ denote a given realization plan profile with $\gamma^{0i}(\varpi^i)>0$, which serves as a starting point for the smooth path discussed later. For $t\in(0,1]$, we constitute a logarithmic-barrier artificial game $\Gamma_{s}^{l_n}(t)$ in the sequence form where each player $i$ derives their optimal response to a given strategy $\hat\gamma\in\Lambda$ by solving the strictly convex optimization problem,
 \begin{equation}
 	\label{ch3:opt:lgne}
 	\begin{array}{rl}
 		\max\limits_{\gamma^i} & (1-t)\sum\limits_{j\in M_i}\sum\limits_{a\in A(I^j_i)}\gamma^i(\varpi^i_{I^j_i}a)g^i(\varpi^i_{I^j_i}a,\hat\gamma^{-i})\\
 		&+t\sum\limits_{(j,a)\in D_i}(\gamma^{0i}(\varpi^i_{I^j_i}a)\ln\gamma^i(\varpi^i_{I^j_i}a)-\gamma^i(\varpi^i_{I^j_i}a))\\
 		\text{s.t.} & \sum\limits_{a\in A(I^j_i)}\gamma^i(\varpi^i_{I^j_i}a)-\gamma^i(\varpi^i_{I^j_i})=0,\;j\in M_i.
 	\end{array}
 \end{equation}
 In this objective function, the first term represents the ultimate optimization goal. The second term constrains the strategy space to be strictly positive and enables the explicit design of the starting point. According to the Nash equilibrium principle, $\gamma^*$ is said to be a Nash equilibrium of the artificial game $\Gamma_{s}^{l_n}(t)$ if and only if, for every player $i\in N$, the strategy $\gamma^{*i}$ is an optimal solution to Problem~(\ref{ch3:opt:lgne}) given the strategies $\gamma^{*-i}$ of the other players.
 
 By imposing the KKT optimality conditions on Problem~(\ref{ch3:opt:lgne}) simultaneously for all players and setting $\hat\gamma=\gamma$, we obtain the following polynomial equilibrium system of $\Gamma_{s}^{l_n}(t)$,
 \begin{equation}
 	\label{ch3:eqt:lgne}
 	\begin{array}{l}
 		(1-t)g^i(\varpi^i_{I^j_i}a,\gamma^{-i})+\lambda^i(\varpi^i_{I^j_i}a)-t-\nu^i_{I^j_i} = 0,\;i\in N,(j,a)\in D_i,\\
 		(1-t)g^i(\varpi^i_{I^j_i}a,\gamma^{-i})-\nu^i_{I^j_i} + \zeta^i_{I^j_i}(a) = 0,\;i\in N,(j,a)\notin D_i,\\
 		\sum\limits_{a\in A(I^j_i)}\gamma^i(\varpi^i_{I^j_i}a)-\gamma^i(\varpi^i_{I^j_i})=0,\;i\in N,j\in M_i,\\
 		\gamma^i(\varpi^i_{I^j_i}a)\lambda^i(\varpi^i_{I^j_i}a)=t\gamma^{0i}(\varpi^i_{I^j_i}a),\;0<\gamma^i(\varpi^i_{I^j_i}a),\;0<\lambda^i(\varpi^i_{I^j_i}a),\;i\in N,(j,a)\in D_i,
 	\end{array}
 \end{equation}
 where $\zeta^i_{I^j_i}(a)=\sum_{{j_q}\in M_i(\varpi^i_{I^j_i}a)}\nu^i_{I^{j_q}_i}$. 
 
 One can see that $\gamma^*$ constitutes a Nash equilibrium of the artificial game $\Gamma_{s}^{l_n}(t)$ if and only if there exists a unique pair of $(\lambda^*,\nu^*)$ together with $\gamma^*$ satisfying System~(\ref{ch3:eqt:lgne}). Since $\gamma^* \in \Lambda_{++}$, the mixed-strategy profile $\sigma(\gamma^*)$ is well-defined and satisfy $(\sigma(\gamma^*), \gamma^*) \in T$.
 
 \begin{theorem}\label{ch3:thm:lgnelim}{\em
 		Let $\{\gamma^*(t_k)\}_{k=1}^\infty$ be a sequence of realization-plan profiles generated by System~(\ref{ch3:eqt:lgne}) with $t=t_k\in(0,1]$ and $\lim\limits_{k\to \infty}t_k=0$. Then every limit point of the sequence $\{\gamma^*(t_k)\}_{k=1}^\infty$ yields a sequence-form Nash equilibrium.
 	}
 \end{theorem}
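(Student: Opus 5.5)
Let $\gamma^*$ be a limit point of $\{\gamma^*(t_k)\}$; after passing to a subsequence we assume $\gamma^*(t_k)\to\gamma^*$. Since $\Lambda$ is compact, $\gamma^*\in\Lambda$. The plan is to show that $\gamma^*$, together with suitable multipliers, satisfies the limiting system~(\ref{ch3:eqt:ne}) in the relaxed form $\gamma\ge0$, $\lambda\ge0$, $\gamma\lambda=0$. Because those are exactly the KKT conditions of the linear program~(\ref{ch3:opt:ne}) against $\gamma^{*-i}$, this makes $\gamma^{*i}$ a best response for every $i$. Theorem~\ref{ch3:thm:ne} (or the argument in its sufficiency part) then shows that $\gamma^*$ is a sequence-form Nash equilibrium. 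Equivalently, it suffices to show directly that $g^i(\gamma^{*i},\gamma^{*-i})\ge g^i(\tilde\gamma^i,\gamma^{*-i})$ for all $\tilde\gamma^i\in\Lambda^i$. Indeed, the necessity computation in the proof of Theorem~\ref{ch3:thm:ne}, read backwards, turns this optimality into the statement that every sequence with positive realization weight is a best-response sequence, which is Definition~\ref{ch3:def:sfne}.

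I would use the variational route, because it avoids having to bound the multipliers. For each $k$, $\gamma^{*i}(t_k)$ maximizes the strictly concave objective of Problem~(\ref{ch3:opt:lgne}) with $\hat\gamma=\gamma^*(t_k)$ and $t=t_k$, over the set of realization plans (positivity being enforced by the barrier). Fix any $\tilde\gamma^i\in\Lambda^i_{++}$. Comparing objective values gives
\[
(1-t_k)\big[g^i(\gamma^{*i}(t_k),\gamma^{*-i}(t_k))-g^i(\tilde\gamma^i,\gamma^{*-i}(t_k))\big]\ge t_k\big[B(\tilde\gamma^i)-B(\gamma^{*i}(t_k))\big],
\]
where $B(\gamma^i)=\sum_{(j,a)\in D_i}\big(\gamma^{0i}(\varpi^i_{I^j_i}a)\ln\gamma^i(\varpi^i_{I^j_i}a)-\gamma^i(\varpi^i_{I^j_i}a)\big)$. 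The constant empty-sequence term cancels on both sides, so it plays no role.

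The key step is bounding the barrier term. Since $\gamma^{*i}(t_k)\in\Lambda^i$, every entry lies in $(0,1]$, so $\ln\gamma^{*i}(t_k)\le 0$ and $B(\gamma^{*i}(t_k))$ is bounded above by a constant; in fact $B\le 0$ on $\Lambda^i$. Hence $-t_kB(\gamma^{*i}(t_k))\ge 0$. Meanwhile $t_kB(\tilde\gamma^i)\to0$ because $\tilde\gamma^i$ is fixed and interior. The right-hand side therefore has $\liminf\ge 0$. The payoff functions are multilinear, hence continuous, so letting $k\to\infty$ gives $g^i(\gamma^{*i},\gamma^{*-i})\ge g^i(\tilde\gamma^i,\gamma^{*-i})$ for every interior $\tilde\gamma^i$. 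By density of $\Lambda^i_{++}$ in $\Lambda^i$ and continuity, the inequality holds for all $\tilde\gamma^i\in\Lambda^i$. Doing this for each $i\in N$ shows that $\gamma^*$ is a Nash profile in the optimization sense.

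The main obstacle is the final translation into Definition~\ref{ch3:def:sfne}: we must show that if $\gamma^{*i}(\varpi^i)>0$, then $\varpi^i$ attains the maximal $g^i_m$. I would argue by contradiction, in the manner of the sufficiency proof of Theorem~\ref{ch3:thm:ne}. Suppose $\gamma^{*i}(\varpi^i)>0$ and some prefix $\varpi^i_{I^{j_q}_i}a_q$ of $\varpi^i$ is not an $I^{j_q}_i$-best response. Then $\gamma^{*i}(\varpi^i_{I^{j_q}_i}a_q)>0$, so shifting that mass to the better action $a'$ together with its optimal continuation strictly increases $g^i$ against $\gamma^{*-i}$. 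This contradicts the optimality just established. By the preceding characterization of best-response sequences (the first theorem of this section), $\varpi^i$ is then a best-response sequence, which completes the argument.
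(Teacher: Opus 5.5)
Your proof is correct, but it takes a different route from the paper's.

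The paper's argument is short. It extracts a convergent subsequence, notes that System~(\ref{ch3:eqt:lgne}) reduces to System~(\ref{ch3:eqt:ne}) at $t=0$, declares the limit a solution of (\ref{ch3:eqt:ne}), and invokes Theorem~\ref{ch3:thm:ne}. Read literally, that passage to the limit has two implicit requirements:
\begin{itemize}
\item the multipliers $(\lambda(t_k),\nu(t_k))$ must stay bounded, which is shown only in Appendix~\ref{app:compactness_sl}, and there for Theorem~\ref{ch3:thm:lgnecnt};
\item the complementarity in (\ref{ch3:eqt:ne}) must be read with $\ge$ rather than $>$, as you noticed.
\end{itemize}

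You work with objective values instead of KKT conditions. A solution of (\ref{ch3:eqt:lgne}) is a global maximizer of Problem~(\ref{ch3:opt:lgne}), because KKT conditions are sufficient for a concave objective with linear constraints. Comparing against a fixed interior $\tilde\gamma^i$ and using the one-sided bound $B\le 0$ on $\Lambda^i_{++}$ then gives optimality of $\gamma^{*i}$ against $\gamma^{*-i}$. This needs no information about multipliers at all. Your final conversion to Definition~\ref{ch3:def:sfne} is the same mass-shifting deviation the paper uses in the sufficiency half of Theorem~\ref{ch3:thm:ne}.

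What each route buys:
\begin{itemize}
\item Your argument is self-contained and fills the gap the paper leaves implicit.
\item The paper's KKT-limit route, once supplemented by the Appendix bounds, also produces limiting multipliers $(\lambda^*,\nu^*)$. That is a point of $\mathscr{S}_L$ on the $t=0$ level, which is what the path-following analysis actually tracks.
\end{itemize}
If you want multipliers as well, you can recover a pair afterwards from LP duality for Problem~(\ref{ch3:opt:ne}). It need not be the pair obtained as a limit along the path.
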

 \begin{proof}
 	Since each $\gamma^*(t_k)$ lies in the bounded set $\Lambda_{++}$, the sequence $\{\gamma^*(t_k)\}_{k=1}^\infty$ admits a convergent subsequence. For simplicity, we denote this subsequence by the same symbol. Let $\gamma^*=\lim_{k\to\infty}\gamma^*(t_k)$, establishing that $\gamma^*\in \Lambda$. Noting that the equations in System~(\ref{ch3:eqt:lgne}) reduce to those of System~(\ref{ch3:eqt:ne}) when $t = 0$, it follows that $\gamma^*$ serves as a solution component for System (\ref{ch3:eqt:ne}). Consequently, $\gamma^*$ is a sequence-form Nash equilibrium. 
 \end{proof}
 It immediately follows that any limit point of the sequence $\{\sigma(\gamma^*(t_k))\}^\infty_{k=1}$ yields a mixed-strategy Nash equilibrium.
 
 Theorem~\ref{ch3:thm:lgnelim} establishes that as the auxiliary variable $t$ approaches zero, the sequence of realization-plan profiles generated by System~(\ref{ch3:eqt:lgne}) converges to a Nash equilibrium. Next, we will establish the existence of a smooth path along which the points satisfy System~(\ref{ch3:eqt:lgne}). This path initiates from a totally mixed strategy profile and ultimately derives a Nash equilibrium.
 
 \begin{lemma}\label{ch3:thm:lgnests} {\em At $t=1$, System~(\ref{ch3:eqt:lgne}) has a unique solution given by $(\gamma^*(1),\lambda^*(1),\nu^*(1))$ with $\gamma^{*i}(1;\varpi^i_{I^j_i}a)=\gamma^{0i}(\varpi^i_{I^j_i}a)$, $\lambda^{*i}(1;\varpi^{i}_{I^j_i}a)=1$ and $\nu^{*i}_{I^j_i}(1)=1$.}
 \end{lemma}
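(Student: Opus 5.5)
The plan is to exploit the fact that at $t=1$ the payoff term in Problem~(\ref{ch3:opt:lgne}) vanishes. Each player's problem then decouples from $\hat\gamma^{-i}$ and becomes
\[
\max_{\gamma^i}\ \sum_{(j,a)\in D_i}\bigl(\gamma^{0i}(\varpi^i_{I^j_i}a)\ln\gamma^i(\varpi^i_{I^j_i}a)-\gamma^i(\varpi^i_{I^j_i}a)\bigr)
\]
subject to the realization-plan equalities. I would first reduce the variables. By the recursive equations, every non-terminal sequence value $\gamma^i(\varpi^i_{I^j_i})$ is a sum of terminal sequence values $\gamma^i(\varpi^i_{I^j_i}a)$ with $(j,a)\in D_i$. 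So the feasible set is parametrized by the terminal variables, subject only to the root constraints $\sum_{a\in A(I^j_i)}\gamma^i(\varpi^i_{I^j_i}a)=1$ for the first information sets. This reduction is the step I expect to need the most care, namely showing that the terminal values together with the recursion reproduce exactly $\Lambda^i_{++}$. In any case the objective is a separable, strictly concave function of the terminal variables.

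Next comes existence by direct verification. The point $\gamma=\gamma^0$ is feasible because $\gamma^0\in\Lambda_{++}$. At that point the partial derivative of the objective with respect to each terminal variable is $\gamma^{0i}/\gamma^{0i}-1=0$, so $\gamma^0$ is a global maximizer, and it gives a solution of System~(\ref{ch3:eqt:lgne}) at $t=1$. The last line of the system forces $\lambda^{i}(\varpi^i_{I^j_i}a)=t\gamma^{0i}/\gamma^{0i}=1$. Substituting into the first line gives $\lambda-1-\nu^i_{I^j_i}=0$, and I would check each equation against the stated values $\lambda^*=1$, $\nu^*=1$. With the sign convention as displayed, this substitution yields $\nu^i_{I^j_i}=0$, and then the second line holds trivially since $\zeta^i_{I^j_i}(a)=0$. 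So I expect the consistent multiplier value to be $\nu^*=0$. If the intended normalization differs, only this bookkeeping changes, not the argument.

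Finally, uniqueness. Strict concavity of the reduced objective gives a unique optimal $\gamma^{*i}(1)$ for each player, and since the problems at $t=1$ are independent across players, the equilibrium $\gamma^*(1)$ is unique. The multipliers are then forced in three steps:
\begin{enumerate}
\item $\lambda$ is determined by the complementarity equation $\gamma\lambda=t\gamma^0$ because $\gamma>0$.
\item $\nu^i_{I^j_i}$ is determined by the first line at any $(j,a)\in D_i$, if such an action exists at $I^j_i$.
\item Otherwise $\nu^i_{I^j_i}$ is determined by the second line, $\nu^i_{I^j_i}=\zeta^i_{I^j_i}(a)$, working backward from information sets of maximal depth, since $\zeta^i_{I^j_i}(a)$ involves only successor information sets.
\end{enumerate}
Running this backward induction over player $i$'s information-set tree yields a unique $\nu^*(1)$, which completes the argument.
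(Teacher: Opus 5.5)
Your proposal is correct and takes essentially the paper's route: at $t=1$ the problems decouple from $\hat\gamma$, the objective is strictly concave in the terminal-sequence variables with maximizer $\gamma^0$, and the multipliers are then forced by substitution. Your zero-gradient and backward-induction steps fill in details that the paper only asserts.

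You are also right that the displayed system forces $\nu^*(1)=0$, not $1$. The first line at $t=1$ reads $\lambda-1-\nu=0$ with $\lambda=1$, so the statement and the paper's proof misreport this value. The paper itself later starts System~(\ref{ch3:eqt:lgnetrans}) from $\nu^{*i}_{I^j_i}(1)=0$; the value $1$ is the correct one for the alternative System~(\ref{ch3:eqt:lbne}).

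One side claim is inaccurate but harmless. When a sequence is followed by several information sets, the terminal variables must satisfy consistency equations beyond the root constraints. Your argument never needs the exact description of the feasible set, though: the unconstrained maximizer of the separable objective is $\gamma^0$, which is feasible, and non-terminal values are determined by terminal ones.
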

 \begin{proof}
 	At $t=1$, Problem (\ref{ch3:opt:lgne}) is strictly convex and therefore admits at most one optimal solution. Applying KKT optimality conditions for each player yields the following system,
 	\begin{equation}\label{ch3:eqt:lgnests}\begin{array}{l}
 			\lambda^i(\varpi^i_{I^j_i}a)-1-\nu^i_{I^j_i} = 0,\;i\in N,(j,a)\in D_i,\\
 			-\nu^i_{I^j_i} + \zeta^i_{I^j_i}(a) = 0,\;i\in N,(j,a)\notin D_i,\\
 			\sum\limits_{a\in A(I^j_i)}\gamma^i(\varpi^i_{I^j_i}a)-\gamma^i(\varpi^i_{I^j_i})=0,\;i\in N,j\in M_i,\\
 			\gamma^i(\varpi^i_{I^j_i}a)\lambda^i(\varpi^i_{I^j_i}a)=\gamma^{0i}(\varpi^i_{I^j_i}a),\;0<\gamma^i(\varpi^i_{I^j_i}a),\;0<\lambda^i(\varpi^i_{I^j_i}a),\;i\in N,(j,a)\in D_i.\\
 		\end{array}
 	\end{equation}
 	This system coincides with~(\ref{ch3:eqt:lgne}) evaluated at $t=1$, since the objective function becomes independent of $\hat{\gamma}$. We consider the candidate solution $\gamma^*(1)$ defined by $\gamma^{*i}(1;\varpi^i_{I^j_i}a)=\gamma^{0i}(\varpi^i_{I^j_i}a)$. Substituting $\gamma=\gamma^*(1)$ to System (\ref{ch3:eqt:lgnests}) results in a unique solution for $(\lambda,\nu)$, denoted by $(\lambda^*(1),\nu^*(1))$ with $\lambda^{*i}(1;\varpi^i_{I^j_i}a)=1$ and $\nu^{*i}_{I^j_i}(1)=1$. As a result, $\gamma^{*i}(1)$ is the unique optimal solution to (\ref{ch3:opt:lgne}) and System~(\ref{ch3:eqt:lgne}) admits a unique solution $(\gamma^*(1),\lambda^*(1),\nu^*(1))$ at $t=1$. This completes the proof.
 \end{proof}
 
 Lemma~\ref{ch3:thm:lgnests} shows that System~(\ref{ch3:eqt:lgne}) possesses a unique solution at $t=1$. In the subsequent discussion, we demonstrate the existence of a connected component that intersects both the $t=1$ and $t=0$ levels. Before progressing further, it is essential to introduce Browder's fixed-point theorem~\cite{Browdercontinuityfixedpoints1960}.
 \begin{theorem}\label{ch3:thm:Browder}{\em\textbf{(Browder's fixed point theorem).} Let $C$ be a nonempty, compact and convex subset of $\mathbb{R}^m$ and $h:C\times[0,1]\to C$ be a continuous function. Then the set $H =\{(z,t)\in C\times[0,1]|z=h(z,t)\}$ contains a connected set $H^c$ such that $C\times\{1\}\cap H^c\neq\emptyset$ and $C\times\{0\}\cap H^c\neq\emptyset$.}
 \end{theorem}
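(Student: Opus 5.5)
The plan is to argue by contradiction, combining Brouwer's fixed-point theorem with a separation lemma for compact metric spaces. First, $H$ is compact: it is closed in the compact set $C\times[0,1]$, being the zero set of the continuous map $(z,t)\mapsto z-h(z,t)$. For each fixed $t$, Brouwer's theorem applied to $h(\cdot,t):C\to C$ gives a fixed point. Hence the closed sets $A=H\cap(C\times\{1\})$ and $B=H\cap(C\times\{0\})$ are nonempty and disjoint. Suppose, contrary to the claim, that no connected subset of $H$ meets both $A$ and $B$.

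The key topological input is the classical separation lemma (Kuratowski/Whyburn). In a compact metric space $X$ with disjoint closed sets $A,B$, if no connected subset of $X$ meets both, then $X=X_A\cup X_B$ with $X_A,X_B$ disjoint, closed, $A\subseteq X_A$ and $B\subseteq X_B$.

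I expect this lemma to be the main obstacle. I would prove it by showing that, in a compact Hausdorff space, components coincide with quasi-components, that is, intersections of clopen sets. Thus each component $K$ meeting $A$ is disjoint from $B$ and so lies in a clopen set avoiding $B$. Compactness of $A$ then lets one cover $A$ by finitely many such clopen sets. Their union is the desired $X_A$, and $X_B=X\setminus X_A$. Since this is standard point-set topology, I might cite it rather than reprove it in full.

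With the separation in hand, I would build a continuous $\varphi:C\times[0,1]\to[0,1]$ with $\varphi\equiv 0$ on $X_A$ and $\varphi\equiv 1$ on $X_B$. Urysohn's lemma provides this, since $X_A,X_B$ are disjoint closed subsets of a metric space; explicitly, one may take $\varphi=d(\cdot,X_A)/(d(\cdot,X_A)+d(\cdot,X_B))$. Then consider $F:C\times[0,1]\to C\times[0,1]$ defined by $F(z,t)=(h(z,t),\varphi(z,t))$. This is a continuous self-map of a nonempty compact convex set, so Brouwer yields a fixed point $(z,t)$.

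The fixed point satisfies $z=h(z,t)$, so $(z,t)\in H=X_A\cup X_B$. If $(z,t)\in X_A$, then $t=\varphi(z,t)=0$, so $(z,t)\in B\subseteq X_B$, contradicting $X_A\cap X_B=\emptyset$. If $(z,t)\in X_B$, then $t=1$, so $(z,t)\in A\subseteq X_A$, again a contradiction. Therefore some connected subset of $H$ meets both levels. Taking $H^c$ to be the connected component of $H$ containing such a subset gives a set with $C\times\{1\}\cap H^c\neq\emptyset$ and $C\times\{0\}\cap H^c\neq\emptyset$, as required.
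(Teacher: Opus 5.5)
Your proof is correct. It differs from the paper only in that the paper gives no proof of Theorem~\ref{ch3:thm:Browder}: it cites Browder (1960) and uses the result only to obtain the connected set $\mathscr{E}^c$ in the proof of Theorem~\ref{ch3:thm:lgnecnt}.

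You instead reduce the theorem to Brouwer's theorem. Suppose no connected subset of the compact set $H$ joins $A=H\cap(C\times\{1\})$ to $B=H\cap(C\times\{0\})$. The Kuratowski--Whyburn separation lemma then splits $H$ into disjoint closed sets $X_A\supseteq A$ and $X_B\supseteq B$. The augmented self-map $(z,t)\mapsto(h(z,t),\varphi(z,t))$ of $C\times[0,1]$ would then have a fixed point lying in neither piece, which is impossible.

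Two points deserve a line each in a written version:
\begin{itemize}
\item You move too quickly from ``the quasi-component of $a\in A$ misses $B$'' to ``some single clopen set containing $a$ misses $B$.'' This step needs the finite intersection property on the compact set $B$: the traces on $B$ of the clopen sets containing $a$ are closed and have empty intersection.
\item The sets $X_A$ and $X_B$ are closed in $C\times[0,1]$, not merely in $H$, because $H$ is closed. This is what makes your distance formula for $\varphi$ well defined and continuous.
\end{itemize}

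As for what each approach buys, the citation keeps the exposition short. Your argument makes the existence part of the paper depend only on Brouwer's theorem and elementary point-set topology. By passing to the component containing the joining set, you also produce exactly the object Theorem~\ref{ch3:thm:lgnecnt} invokes: a connected component meeting both levels.
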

 Let $\widetilde{\mathscr{S}}_L=\{(\gamma,\lambda,\nu,t)|(\gamma,\lambda,\nu,t) \text{ satisfies System~(\ref{ch3:eqt:lgne}) with } 0<t\leq 1\}$ and $\mathscr{S}_L$ be the closure of $\widetilde{\mathscr{S}}_L$. By applying Theorem~\ref{ch3:thm:Browder}, we arrive at the following conclusion.
 \begin{theorem}\label{ch3:thm:lgnecnt}{\em There is a connected component in $\mathscr{S}_L$ intersecting both $\mathbb{R}^{n_0}\times\mathbb{R}^{n_0}\times\mathbb{R}^{m_0}\times\{1\}$ and $\mathbb{R}^{n_0}\times\mathbb{R}^{n_0}\times\mathbb{R}^{m_0}\times\{0\}$.}
 \end{theorem}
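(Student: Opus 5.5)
The plan is to realise the solution set of System~(\ref{ch3:eqt:lgne}) as the fixed-point set of a continuous map $h:C\times[0,1]\to C$ on a compact convex set, apply Theorem~\ref{ch3:thm:Browder}, and then transfer the resulting connected set to $\mathscr{S}_L$ through a continuous map.

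\textbf{Construction of the map.} Take $C=\Lambda$, which is nonempty, compact and convex. For $t\in(0,1]$ and $\hat\gamma\in\Lambda$, let $h(\hat\gamma,t)$ be the profile whose $i$th component is the unique optimal solution of Problem~(\ref{ch3:opt:lgne}) given $\hat\gamma^{-i}$. Existence and uniqueness follow from strict concavity of the objective together with the barrier. The barrier forces the optimiser into $\Lambda_{++}$, since the gradient of $\gamma^{0i}\ln\gamma^i$ blows up at the boundary. Continuity on $\Lambda\times(0,1]$ follows from Berge's maximum theorem, because the maximiser is single valued.

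\textbf{Main obstacle: defining $h$ at $t=0$.} At $t=0$ the problem is linear and its maximiser need not be unique. I would sidestep this by reparametrising. Set $h(\hat\gamma,t)=\beta(\hat\gamma,\delta+(1-\delta)t)$ for a small $\delta>0$, where $\beta(\cdot,s)$ denotes the barrier best response at level $s$. This gives Browder's theorem a connected set $H^c_\delta$ joining the levels $s=1$ and $s=\delta$.

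\textbf{Passing to the limit $\delta\to0$.} Each $H^c_\delta$ is a compact connected set. Its intersection with the level $s=1$ is the single point $(\gamma^0,1)$, by Lemma~\ref{ch3:thm:lgnests}. I would take a Hausdorff (Kuratowski) limit of the $H^c_\delta$ in the compact space $\Lambda\times[0,1]$ as $\delta\to0$. The upper limit of connected sets that all contain the common point $(\gamma^0,1)$ is connected. This limit lies in the closure of the fixed-point set and meets both $s=1$ and $s=0$.

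\textbf{Lifting to the full variables.} For every fixed point with $t\in(0,1]$, the multipliers are uniquely determined and continuous in $(\gamma,t)$. Indeed, $\lambda^i(\varpi^i_{I^j_i}a)=t\gamma^{0i}(\varpi^i_{I^j_i}a)/\gamma^i(\varpi^i_{I^j_i}a)$. The $\nu^i$ are then obtained by backward recursion over information sets, using the first two equations of (\ref{ch3:eqt:lgne}). The map $(\gamma,t)\mapsto(\gamma,\lambda,\nu,t)$ is therefore a continuous bijection onto $\widetilde{\mathscr{S}}_L$.

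It remains to show that $\lambda$ and $\nu$ stay bounded as $t\to0$. The $\nu$ are bounded because the payoffs $g^i$ are bounded and the recursion is finite. The $\lambda$ are then bounded by the first equation of (\ref{ch3:eqt:lgne}).

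With this bound, the image of the connected set is connected and lies in $\widetilde{\mathscr{S}}_L$. Its closure is connected, lies in $\mathscr{S}_L$, contains the $t=1$ point, and contains limit points at $t=0$. The component of $\mathscr{S}_L$ containing it therefore meets both hyperplanes.
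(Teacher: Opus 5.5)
Your construction of the connected set in $\Lambda\times[0,1]$ is a valid alternative to the paper's. The paper adds the proximal term $-\frac{1}{2}\sum_{(j,a)\in D_i}(\gamma^i(\varpi^i_{I^j_i}a)-\hat\gamma^i(\varpi^i_{I^j_i}a))^2$ in Problem~(\ref{ch3:eqt:conect}), so that the response is single-valued even at $t=0$ and Theorem~\ref{ch3:thm:Browder} applies once. You instead apply Browder on $[\delta,1]$ and pass to a Kuratowski upper limit. That limit is connected because every $H^c_\delta$ contains $(\gamma^0,1)$ by Lemma~\ref{ch3:thm:lgnests}, and your route spares you any continuity argument at $t=0$.

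\textbf{The gap: boundedness of $(\lambda,\nu)$ as $t\to 0$.} This bound is precisely what lets the lifted set reach the level $t=0$, and your one-line justification is circular. The backward recursion you use for $\nu$ starts, at terminal sequences, from the first equation of (\ref{ch3:eqt:lgne}), $\nu^i_{I^j_i}=(1-t)g^i(\varpi^i_{I^j_i}a,\gamma^{-i})+\lambda^i(\varpi^i_{I^j_i}a)-t$. Here $\lambda^i(\varpi^i_{I^j_i}a)=t\gamma^{0i}(\varpi^i_{I^j_i}a)/\gamma^i(\varpi^i_{I^j_i}a)$ is of type $0/0$ along sequences with $\gamma^i(\varpi^i_{I^j_i}a)\to 0$. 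So bounded payoffs and a finite recursion give you nothing. Aggregating the equations at $I^j_i$ with weights $\gamma^i(\varpi^i_{I^j_i}a)$ makes the difficulty explicit. It shows that $\nu^i_{I^j_i}$ equals a bounded quantity plus $t$ times a positive constant divided by $\gamma^i(\varpi^i_{I^j_i})$. So local information alone does not prevent blow-up at an information set whose realization weight vanishes faster than $t$.

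\textbf{What is needed.} The argument of Appendix~\ref{app:compactness_sl} has three steps.
(i) Write the recursion in the aggregated form (\ref{app1:recs}). It expresses $\nu^i_{I^j_i}$ through descendants' $(1-t)g^i$ and $\lambda^i-t$, with relative-realization weights in $(0,1]$. Since $\lambda^i>0$, this gives a uniform lower bound on every $\nu$.
(ii) At information sets with $\varpi^i_{I^j_i}=\emptyset$, multiply by $\gamma^i(\varpi^i_{I^j_i}a)$ and sum over $a$. Then $\lambda^i$ enters only through $\gamma^i\lambda^i=t\gamma^{0i}\le 1$, which bounds these root $\nu$ from above.
(iii) Propagate forward. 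The quantity $\zeta^i_{I^j_i}(a)=\nu^i_{I^j_i}-(1-t)g^i(\varpi^i_{I^j_i}a,\gamma^{-i})$ is bounded above, and each of its summands is bounded below by (i). Hence each successor $\nu$ is bounded above.
Only after this does the first equation bound $\lambda$.

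\textbf{A smaller point.} Your lift is undefined at $t=0$, and your limit set with its $t=0$ slice removed need not be connected. So ``the image of the connected set'' needs care. One fix is to apply the boundary-bumping theorem to the component containing $(\gamma^0,1)$. A simpler fix uses the fact that the bound above is uniform in $t\in(0,1]$. You can then lift each $H^c_\delta$, which lies at levels $\ge\delta$. Take the upper limit of the lifted sets directly in $(\gamma,\lambda,\nu,t)$-space. The paper glosses over the same point.
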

 \begin{proof}
 	For $(\hat{\gamma},t)\in \Lambda\times[0,1]$, define $\varphi(\hat{\gamma},t)=(\gamma^i:i\in N)$, where $\gamma^i$ denotes the unique solution to the strictly convex optimization problem,
 	\begin{equation}
 		\label{ch3:eqt:conect}
 		\begin{array}{rl}
 			\max\limits_{\gamma^i} & (1-t)\sum\limits_{j\in M_i}\sum\limits_{a\in A(I^j_i)}\gamma^i(\varpi^i_{I^j_i}a)g^i(\varpi^i_{I^j_i}a,\hat\gamma^{-i})\\
 			&+t\sum\limits_{(j,a)\in D_i}(\gamma^{0i}(\varpi^i_{I^j_i}a)\ln\gamma^i(\varpi^i_{I^j_i}a)-\gamma^i(\varpi^i_{I^j_i}a))\\
 			&-\frac{1}{2}\sum\limits_{(j,a)\in D_i}(\gamma^i(\varpi^i_{I^j_i}a)-\hat\gamma(\varpi^i_{I^j_i}a))^2\\
 			\text{s.t.} & \sum\limits_{a\in A(I^j_i)}\gamma^i(\varpi^i_{I^j_i}a)-\gamma^i(\varpi^i_{I^j_i})=0,\;j\in M_i.
 		\end{array}
 	\end{equation}
 	Based on Theorem 2.2.2 in~\cite{FiaccoIntroductionSensitivityStability1983}, it follows that $\varphi(\gamma, t)$ is a continuous function that maps from $\Lambda \times [0,1]$ to $\Lambda$. Let $\mathscr{E}=\{(\gamma,t)\in\Lambda\times[0,1]|\varphi(\gamma,t)=\gamma\}$. Theorem~\ref{ch3:thm:Browder} ensures the existence of a connected component within $\mathscr{E}$ that intersects both $\mathbb{R}^{n_0}\times\{1\}$ and $\mathbb{R}^{n_0}\times\{0\}$. We denote this connected component as $\mathscr{E}^c$, and specifically refer to the portion where $t>0$ as $\widetilde{\mathscr{E}}^c$.
 	
 	By employing the optimality condition to Problem~(\ref{ch3:eqt:conect}), we derive a polynomial system that coincides with System~(\ref{ch3:eqt:lgne}). Hence, for any $(\gamma,t)\in \widetilde{\mathscr{E}}^c$, there exists a unique pair $(\lambda,\nu)$ such that System (\ref{ch3:eqt:lgne}) is satisfied. Let $\widetilde{\mathscr{S}}^c_L=\{(\gamma,\lambda,\nu,t)\in \widetilde{\mathscr{S}}_L|(\gamma,t)\in \widetilde{\mathscr{E}}^c\}$ and $\mathscr{S}^c_L$ be the closure of $\widetilde{\mathscr{S}}^c_L$. We obtain from the above discussion that $\mathscr{S}^c_L$ constitutes a connected component within $\mathscr{S}_L$ intersecting $\mathbb{R}^{n_0}\times\mathbb{R}^{n_0}\times\mathbb{R}^{m_0}\times\{1\}$. Considering a convergent sequence $\{(\gamma(t_k), t_k), k = 1, 2, \ldots\} \subseteq \widetilde{\mathscr{E}}^c$ with $\lim_{k\to\infty} t_k = 0$, we associate each $(\gamma(t_k), t_k)$ with the corresponding pair $(\lambda(t_k), \mu(t_k))$, which is bounded as shown in Appendix~\ref{app:compactness_sl}. The boundedness of $\{(\gamma(t_k),\lambda(t_k),\mu(t_k),t_k),k=1,2,\ldots\}\subseteq \widetilde{\mathscr{S}}^c_L$ guarantees that it has a convergent subsequence. Thus, $\mathscr{S}^c_L$ intersects with $\mathbb{R}^{n_0}\times\mathbb{R}^{n_0}\times\mathbb{R}^{m_0}\times\{0\}$. This completes the proof.
 \end{proof}
 According to Lemma~\ref{ch3:thm:lgnests}, the connected component described in Theorem~\ref{ch3:thm:lgnecnt} is unique and intersects the $t=1$ level at $(\gamma^*(1),\lambda^*(1),\nu^*(1),1)$. Let $\alpha=(\alpha(\varpi^i_{I^j_i}a):i\in N, j\in M_i, a\in A(I^j_i))\in\mathbb{R}^{n_0}$ be an arbitrary vector with sufficiently small $\|\alpha\|$. To realize a smooth path, System~(\ref{ch3:eqt:lgne}) is accordingly modified. Specifically, we subtract the expression $t(1-t)\alpha$ from the left-hand side of the first group of equations, resulting in a new system. Let $p(\gamma,\lambda,\nu,t;\alpha)$ represent the left-hand sides of the equations in the newly obtained system. When treating $\alpha$ as a constant, we define $p_\alpha(\gamma,\lambda,\nu,t) = p(\gamma,\lambda,\nu,t;\alpha)$. This gives rise to the following theorem.
 \begin{theorem}\label{ch3:thm:lgnesmp}{\em Given almost any $\alpha\in\mathbb{R}^{n_0}$ with sufficiently small $\|\alpha\|$, there exists a smooth path in $\mathscr{S}_L$ that starts from $(\gamma^*(1),\lambda^*(1),\nu^*(1),1)$ on the level of $t=1$ and leads to a sequence-form Nash equilibrium as $t$ approaches zero.}
 \end{theorem}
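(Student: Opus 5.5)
We argue by the standard parametric transversality theorem: regularity of zero sets for almost every parameter value. Introduce the perturbed system $p(\gamma,\lambda,\nu,t;\alpha)=0$ on the open domain where $\gamma^i(\varpi^i_{I^j_i}a)>0$, $\lambda^i(\varpi^i_{I^j_i}a)>0$ for $(j,a)\in D_i$, and $0<t<1$. The variables $(\gamma,\lambda,\nu,t)$ number $2n_0+m_0+1$, matching the equation count plus one. Treating $\alpha$ as an extra variable, we show that the Jacobian $D_{(\gamma,\lambda,\nu,\alpha)}p$ has full row rank at every zero with $t\in(0,1)$. The perturbation $-t(1-t)\alpha$ appears only in the first group of equations, so its columns give $-t(1-t)$ times an identity block on those rows, which is nonsingular for $t\in(0,1)$. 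For the remaining rows, the complementarity rows have $\gamma$- and $\lambda$-derivatives $\lambda>0$ and $\gamma>0$. The flow-conservation rows, taken with respect to the terminal-type variables, have full rank because the sequence-form constraint matrix has full row rank. The rows for $(j,a)\notin D_i$ involve $\nu$ through $-\nu^i_{I^j_i}+\zeta^i_{I^j_i}(a)$, and this structure can be triangularized along the tree order of information sets. Hence $0$ is a regular value of $p$ as a function of $(\gamma,\lambda,\nu,t,\alpha)$. By the parametric transversality theorem, for almost every $\alpha$, $0$ is a regular value of $p_\alpha$ on $t\in(0,1)$. The zero set $p_\alpha^{-1}(0)$ is therefore a smooth one-dimensional manifold.

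Next we handle the endpoint $t=1$. There the perturbation vanishes, so Lemma~\ref{ch3:thm:lgnests} gives the unique solution $(\gamma^*(1),\lambda^*(1),\nu^*(1),1)$. We check that the Jacobian of $p_\alpha$ in $(\gamma,\lambda,\nu)$ is nonsingular at this point. At $t=1$ the payoff terms drop out, and the system decouples into the KKT system of the strictly concave problem~(\ref{ch3:opt:lgne}). Its Jacobian is the KKT matrix of a strictly concave program with linearly independent constraints, and is therefore invertible. By the implicit function theorem, a unique smooth branch leaves the $t=1$ level. Since $p_\alpha^{-1}(0)$ is a one-manifold, the component containing this branch is diffeomorphic to an interval or a circle. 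It cannot be a circle, since it has an endpoint at $t=1$. It cannot return to $t=1$, by uniqueness.

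We then show the path must reach $t\to0$. We need boundedness for $t\in[\delta,1]$ and for $\|\alpha\|$ small, and that $\gamma$ stays away from the boundary for $t\ge\delta$. For $t\ge\delta>0$, the complementarity relation $\gamma\lambda=t\gamma^{0}$ combined with bounded $\lambda$ forces $\gamma$ away from $0$. The bound on $\lambda$ and $\nu$ follows from the argument of Appendix~\ref{app:compactness_sl}, whose estimates tolerate a small perturbation $\alpha$. Hence the path cannot escape to infinity or to the boundary while $t$ stays bounded away from $0$, and it has no other endpoint in $(0,1)$. Thus $t$ must approach $0$ along the path. By Theorem~\ref{ch3:thm:lgnelim}, adapted with the perturbation term vanishing as $t\to0$, every limit point of $\gamma$ along the path is a sequence-form Nash equilibrium. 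Theorem~\ref{ch3:thm:lgnecnt} provides consistency with the existence of such a component.

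The main obstacle is the full-rank argument for the rows indexed by $(j,a)\notin D_i$, which contain no $\alpha$. The derivatives of these rows with respect to $\nu$ must be combined with the $\gamma$-dependence of the flow rows so that the whole block has full rank. We would first try to order the information sets from the leaves upward. For each non-terminal $(j,a)$, the variable $\nu^i_{I^{j_q}_i}$ of a successor information set appears in the row with coefficient $1$ and can serve as a pivot. The remaining $\nu$ variables at the root-level sets are then pivoted by the $D_i$ rows through $\alpha$. If this fails, the fallback is to perturb the second group of equations with its own $\alpha$-components, which makes the transversality argument immediate.
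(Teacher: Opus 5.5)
Your argument is correct and has the same skeleton as the paper: full row rank of $Dp$ with $\alpha$ treated as a variable, transversality, the implicit function theorem, regularity at $t=1$, and Theorem~\ref{ch3:thm:lgnelim} for the limit. Two of its components are genuinely different.

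First, you let the perturbation act only on the rows $(j,a)\in D_i$. In the paper, $\alpha\in\mathbb{R}^{n_0}$ is subtracted from all $n_0$ payoff rows, as the block $-t(1-t)I^{n_0\times n_0}$ in Appendix~\ref{app:fullrowrank} and System~(\ref{ch3:eqt:lgnetrans}) show. There the $\alpha$-columns alone pivot every payoff row, and full rank reduces to that of the flow and complementarity rows in $(\gamma,\lambda)$. So your ``fallback'' is exactly the paper's construction. Your primary version also goes through. By perfect recall each information set has a unique parent sequence, so choosing one successor $\nu$ per non-terminal sequence gives distinct columns. The submatrix formed by the rows $(j,a)\notin D_i$ and these columns is unit lower-triangular when sequences are ordered by length, and these columns vanish on the flow and complementarity rows. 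Eliminating in the order $\alpha$, then $\nu$, then $(\gamma,\lambda)$ gives full row rank with only $|D|$ effective parameters.

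Second, to reach $t=0$ the paper relies on the Browder-based component of Theorem~\ref{ch3:thm:lgnecnt}. You instead use the classification of one-manifolds together with compactness of the zero set on $t\in[\delta,1]$, which follows from Appendix~\ref{app:compactness_sl} and $\gamma\lambda=t\gamma^0$. This is self-contained and makes Theorem~\ref{ch3:thm:lgnecnt} unnecessary for this statement.

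Two small repairs are needed.
\begin{itemize}
\item The flow rows must be pivoted on all $\gamma$-columns, not only the terminal-type ones. The restriction $B_1$ has a zero row at any information set all of whose actions lead to further own information sets. This causes no conflict, because the complementarity rows are pivoted on $\lambda$ through $E$.
\item At $t=1$, strict concavity alone does not make the KKT matrix invertible. You need the Hessian, which is $-\mathrm{diag}(\gamma^0/\gamma^2)$ on terminal coordinates and zero elsewhere, to be negative definite on $\ker B$. This holds because $B_2$ has full column rank, exactly as used in Appendix~\ref{app:fullrowrank}.
\end{itemize}
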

 \begin{proof}
 	The second group of equations and inequality constraints in System~(\ref{ch3:eqt:lgne}) reveals that the elements in $\mathscr{S}_L$ satisfy $\gamma\in\Lambda_{++}$ for $t\in(0,1)$. With the continuous differentiability of $p(\gamma,\lambda,\nu,t;\alpha)$ on $\Lambda_{++}\times\mathbb{R}^{n_0}_{++}\times\mathbb{R}^{m_0}\times(0,1)\times\mathbb{R}^{n_0}$, we have proved in Appendix~\ref{app:fullrowrank} that the Jacobian matrix of $p(\gamma,\lambda,\nu,t;\alpha)$ is of full-row rank in this region. As an application of the transversality theorem outlined by Eaves and Schmedders~\cite{EavesGeneralequilibriummodels1999}, it can be shown that zero is a regular value of $p_\alpha(\gamma,\lambda,\nu,t)$ over $\Lambda_{++}\times\mathbb{R}^{n_0}_{++}\times\mathbb{R}^{m_0}\times(0,1)$ for almost any $\alpha$.
 	
 	We fix $\alpha$ such that zero is a regular value of $p_\alpha(\gamma,\lambda,\nu,t)$ over $\Lambda_{++}\times\mathbb{R}^{n_0}_{++}\times\mathbb{R}^{m_0}\times(0,1)$. By applying the implicit function theorem, the component described in Theorem~\ref{ch3:thm:lgnecnt} defines a smooth path, originating at $(\gamma^*(1),\lambda^*(1),\nu^*(1),1)$ when $t=1$ and terminates at $t=0$. In Appendix~\ref{app:fullrowrank}, we demonstrate that, at $t=1$, zero remains a regular value of $p_\alpha(\gamma,\lambda,\nu,1)$ in $\Lambda_{++}\times\mathbb{R}^{n_0}_{++}\times\mathbb{R}^{m_0}$. This implies that the smooth path does not intersect tangentially with $\mathbb{R}^{n_0}\times\mathbb{R}^{n_0}\times\mathbb{R}^{m_0}\times\{1\}$. Theorems~\ref{ch3:thm:lgnelim} and \ref{ch3:thm:lgnecnt} show that this smooth path ultimately yields a sequence-form Nash equilibrium. This completes the proof.
 \end{proof}
 To facilitate more efficient numerical computations, we present an equivalent reformulation based on Cao et al.~\cite{Caodifferentiablepathfollowingmethod2022}. This reformulation is able to significantly reduce the number of variables involved.
 
 Given $\tau_0>0$ and $\kappa_0>2$, define
 \begin{small} \[\psi_1(v,r;\tau_0,\kappa_0)=\left(\frac{v+\sqrt{v^2+4\tau_0r}}{2}\right)^{\kappa_0} \text{ and } \psi_2(v,r;\tau_0,\kappa_0)=\left(\frac{-v+\sqrt{v^2+4\tau_0r}}{2}\right)^{\kappa_0}.\]\end{small}It follows that $\psi_1(v,r;\tau_0,\kappa_0)\psi_2(v,r;\tau_0,\kappa_0)=(\tau_0r)^{\kappa_0}$. Since $\kappa_0>2$, $\psi_1(v,r;\tau_0,\kappa_0)$ and $\psi_2(v,r;\tau_0,\kappa_0)$ are both continuously differentiable on $\mathbb{R}\times[0,\infty)$. For $x=(x^i(\varpi^i_{I^j_i}a):i\in N,j\in M_i,a\in A(I^j_i))\in \mathbb{R}^{n_0}$, consider the functions $\gamma(x,t)=(\gamma^i(x,t;\varpi^i_{I^j_i}a):i\in N,j\in M_i,a\in A(I^j_i))$ and $\lambda(x,t)=(\lambda^i(x,t;\varpi^i_{I^j_i}a):i\in N,j\in M_i,a\in A(I^j_i))$ where,
 \begin{equation}\label{ch3:eqt:lgnesubs}\begin{array}{l}
 		\gamma^i(x,t;\varpi^i_{I^j_i}a)=\psi_1(x^i(\varpi^i_{I^j_i}a),t^{1/\kappa_0}; \gamma^{0i}(\varpi^i_{I^j_i}a)^{1/\kappa_0}, \kappa_0),\\
 		\lambda^i(x,t;\varpi^i_{I^j_i}a)=\psi_2(x^i(\varpi^i_{I^j_i}a),t^{1/\kappa_0};\gamma^{0i}(\varpi^i_{I^j_i}a)^{1/\kappa_0}, \kappa_0),\; i\in N,(j,a)\in D_i,\\
 		\gamma^i(x,t;\varpi^i_{I^j_i}a)=x^i(\varpi^i_{I^j_i}a),\; i\in N,(j,a)\notin D_i.
 \end{array}\end{equation}
 It is evident that $\gamma^i(x,t;\varpi^i_{I^j_i}a)\lambda^i(x,t;\varpi^i_{I^j_i}a)=t\gamma^{0i}(\varpi^i_{I^j_i}a)$ for $i\in N,(j,a)\in D_i$. By substituting $\gamma^i(x,t;\varpi^i_{I^j_i}a)$ and $\lambda^i(x,t;\varpi^i_{I^j_i}a)$ into the system $p(\gamma,\lambda,\nu,t;\alpha)=0$ for $\gamma^i(\varpi^i_{I^j_i}a)$ and $\lambda^i(\varpi^i_{I^j_i}a)$, we obtain an equivalent formulation with fewer variables and constraints,
 \begin{equation}\label{ch3:eqt:lgnetrans}\begin{array}{l}
 		(1-t)g^i(\varpi^i_{I^j_i}a,\gamma^{-i}(x,t))+\lambda^i(x,t;\varpi^i_{I^j_i}a)\\
 		\hspace{2.5cm}-t-\nu^i_{I^j_i} -t(1-t)\alpha(\varpi^i_{I^j_i}a)= 0,\; i\in N,(j,a)\in D_i,\\
 		(1-t)g^i(\varpi^i_{I^j_i}a,\gamma^{-i}(x,t))-\nu^i_{I^j_i} + \zeta^i_{I^j_i}(a)-t(1-t)\alpha(\varpi^i_{I^j_i}a) = 0,\; i\in N,(j,a)\notin D_i,\\
 		\sum\limits_{a\in A(I^j_i)}\gamma^i(x,t;\varpi^i_{I^j_i}a)-\gamma^i(x,t;\varpi^i_{I^j_i})=0,\; i\in N,j\in M_i.
 	\end{array}
 \end{equation}
 Considering that both $\psi_1$ and $\psi_2$ exhibit continuous differentiability, the smooth path outlined in Theorem~\ref{ch3:thm:lgnesmp} undergoes a transformation into a new smooth path derived from System~(\ref{ch3:eqt:lgnetrans}), using a different variable combination.
 
 Let $\widetilde{\mathscr{P}}_{l_n}=\{(x,\nu,t)|(x,\nu,t)\text{ satisfies System~(\ref{ch3:eqt:lgnetrans}) with } 0<t\leq 1\}$ and $\mathscr{P}_{l_n}$ be the closure of $\widetilde{\mathscr{P}}_{l_n}$. The preceding discussion reveals that $\mathscr{P}_{l_n}$ contains a smooth path that originates from the point $(x^*(1),\nu^*(1),1)$ with $x^{*i}(1;\varpi^i_{I^j_i}a)=\gamma^{0i}(\varpi^i_{I^j_i}a)^{1/\kappa_0}-1$ for $i\in N,(j,a)\in D_i$, $x^{*i}(1;\varpi^i_{I^j_i}a)=\gamma^{0i}(\varpi^i_{I^j_i}a)$ for $i\in N,(j,a)\notin D_i$, and $\nu^{*i}_{I^j_i}(1)=0$ for $i\in N,j\in M_i$. As the parameter $t$ approaches $0$, this path converges to a sequence-form Nash equilibrium. 
 
 \subsection{An Alternative Logarithmic-Barrier Smooth Path} 
 
 Inspired by the relationship between realization plans and behavioral strategies, we develop a contrastive logarithmic-barrier artificial game $\Gamma_s^{\tilde l_n}(t)$ in sequence form. In the artificial game $\Gamma_s^{\tilde l_n}(t)$, each player $i$ determines an optimal response to a prescribed strategy $\hat\gamma\in\Lambda$ by solving the strictly convex optimization problem,
 \begin{equation}
 	\label{ch3:opt:lbne}
 	\begin{array}{rl}
 		\max\limits_{\gamma^i} & (1-t)\sum\limits_{j\in M_i}\sum\limits_{a\in A(I^j_i)}\gamma^i(\varpi^i_{I^j_i}a)g^i(\varpi^i_{I^j_i}a,\hat\gamma^{-i})\\
 		&+t\sum\limits_{j\in M_i}\sum\limits_{a\in A(I^j_i)}\gamma^{0i}(\varpi^i_{I^j_i}a)(\ln\gamma^i(\varpi^i_{I^j_i}a)-\ln\gamma^i(\varpi^i_{I^j_i}))\\
 		\text{s.t.} & \sum\limits_{a\in A(I^j_i)}\gamma^i(\varpi^i_{I^j_i}a)-\gamma^i(\varpi^i_{I^j_i})=0,\;j\in M_i.
 	\end{array}
 \end{equation}
 Through the application of the KKT optimality conditions to Problem~(\ref{ch3:opt:lbne}) and the assumption $\hat{\gamma} = \gamma$, we derive the polynomial equilibrium system of $\Gamma_s^{\tilde l_n}(t)$,
 \begin{equation}
 	\label{ch3:eqt:lbne}
 	\begin{array}{l}
 		(1-t)g^i(\varpi^i_{I^j_i}a,\gamma^{-i})+ (1-|M_i(\varpi^i_{I^j_i}a)|)\lambda^i(\varpi^i_{I^j_i}a)\\
 		\hspace{2.8cm}-\nu^i_{I^j_i} + \zeta^i_{I^j_i}(a) = 0,\;i\in N,j\in M_i,a\in A(I^j_i),\\
 		\sum\limits_{a\in A(I^j_i)}\gamma^i(\varpi^i_{I^j_i}a)-\gamma^i(\varpi^i_{I^j_i})=0,\;i\in N,j\in M_i,\\
 		\gamma^i(\varpi^i_{I^j_i}a)\lambda^i(\varpi^i_{I^j_i}a)=t\gamma^{0i}(\varpi^i_{I^j_i}a),\\
 		\hspace{2.4cm}0<\gamma^i(\varpi^i_{I^j_i}a),\;0<\lambda^i(\varpi^i_{I^j_i}a),\;i\in N,j\in M_i,a\in A(I^j_i).
 	\end{array}
 \end{equation}
 By applying the variable substitutions from the first two terms in (\ref{ch3:eqt:lgnesubs}) for all $i\in N,j\in M_i,a\in A(I^j_i)$, and subtracting the $t(1-t)\alpha$ term, we obtain 
 \begin{equation}
 	\label{ch3:eqt:lbnetrans}
 	\begin{array}{l}
 		(1-t)g^i(\varpi^i_{I^j_i}a,\gamma^{-i}(x,t))+ (1-|M_i(\varpi^i_{I^j_i}a)|)\lambda^i(x,t;\varpi^i_{I^j_i}a)\\
 		\hspace{1cm}-\nu^i_{I^j_i} + \zeta^i_{I^j_i}(a) -t(1-t)\alpha(\varpi^i_{I^j_i}a) = 0,\;i\in N,j\in M_i,a\in A(I^j_i),\\
 		\sum\limits_{a\in A(I^j_i)}\gamma^i(x,t;\varpi^i_{I^j_i}a)-\gamma^i(x,t;\varpi^i_{I^j_i})=0,\;i\in N,j\in M_i.
 	\end{array}
 \end{equation}
System (\ref{ch3:eqt:lbnetrans}) defines a smooth path that originates from the point $(x^*(1),\nu^*(1),1)$ with $x^{*i}(1;\varpi^i_{I^j_i}a)=\gamma^{0i}(\varpi^i_{I^j_i}a)^{1/\kappa_0}-1$ for $i\in N,j\in M_i,a\in A(I^j_i)$, and $\nu^{*i}_{I^j_i}(1)=1$ for $i\in N,j\in M_i$. As the parameter $t$ approaches $0$, this path converges to a Nash equilibrium. 

Let $\widetilde{\mathscr{P}}_{\tilde l_n}=\{(x,\nu,t)|(x,\nu,t)\text{ satisfies System~(\ref{ch3:eqt:lbnetrans}) with } 0<t\leq 1\}$ and $\mathscr{P}_{\tilde l_n}$ be the closure of $\widetilde{\mathscr{P}}_{\tilde l_n}$. By applying the transversality theorem and implicit function theorem, we draw the conclusion that there exists a smooth path in $\mathscr{P}_{\tilde l_n}$ for almost any $\alpha\in\mathbb{R}^{n_0}$ with sufficiently small $\|\alpha\|$, which starts from $(x^*(1), \nu^*(1),1)$ at $t = 1$ and converges to a sequence-form Nash equilibrium as $t$ approaches $0$. 

 \section{Numerical Experiments}
 \label{section5}
 Each of System~(\ref{ch3:eqt:lgnetrans}) and~(\ref{ch3:eqt:lbnetrans}) defines a smooth path. By applying the predictor--corrector method, these paths can be numerically traced, leading to the computation of a Nash equilibrium.
 \subsection{Effectiveness Validation}
 \begin{example}{\em
 	In this example, we execute the tracing procedure to solve the games depicted in Fig.~\ref{fig:game1}. The same starting points are employed, and the paths produced by Systems~(\ref{ch3:eqt:lgnetrans}) and (\ref{ch3:eqt:lbnetrans}) are shown in Figs.~\ref{ch3:fig:lgnepathg1}--\ref{ch3:fig:lbnepathg1}, visually illustrating the convergence of the smooth path toward a sequence-form Nash equilibrium.}
 \end{example}
  \begin{figure}[htbp]
 	\centering
 	\begin{minipage}{0.49\textwidth}
 		\centering
 		\includegraphics[width=1\textwidth, height=0.20\textheight]{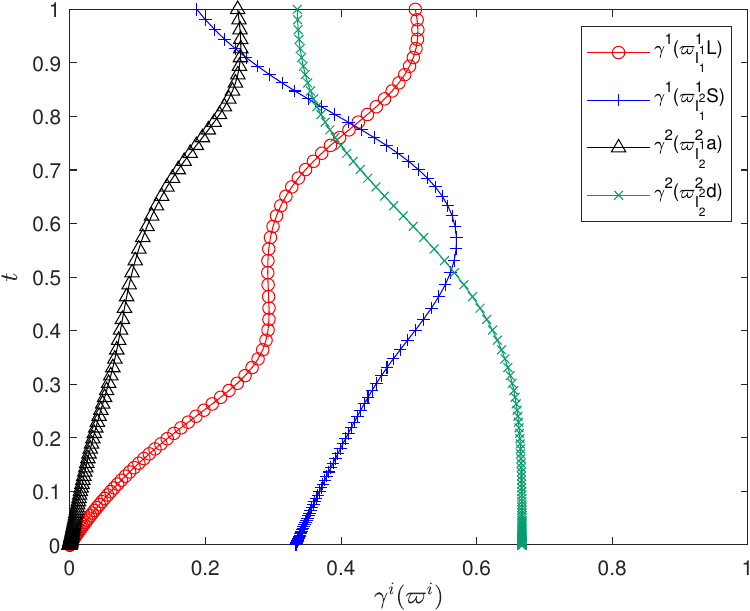}
 		\caption{\label{ch3:fig:lgnepathg1}{\footnotesize Realization Plans on the Path Specified by System~(\ref{ch3:eqt:lgnetrans}) for the Game in Fig.~\ref{fig:game1}}}\end{minipage}\hfill
 	\begin{minipage}{0.49\textwidth}
 		\centering
 		\includegraphics[width=1\textwidth, height=0.20\textheight]{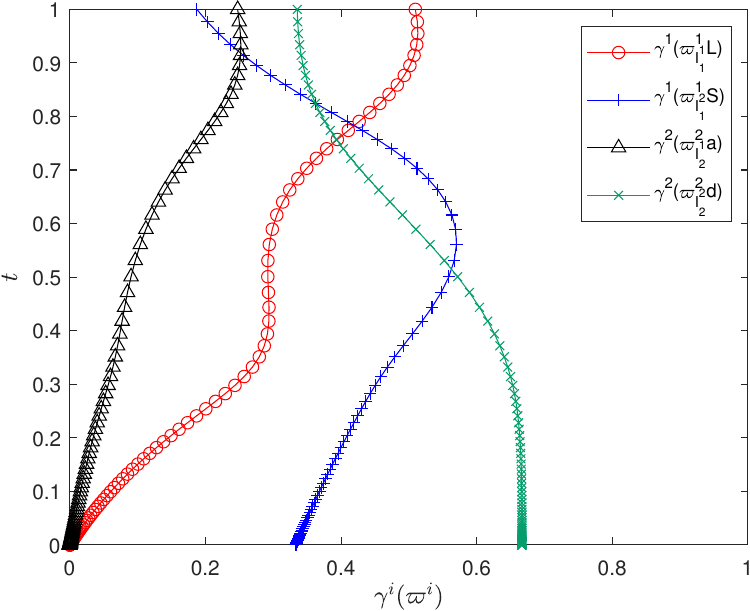}
 		\caption{\label{ch3:fig:lbnepathg1}{\footnotesize Realization Plans on the Path Specified by System~(\ref{ch3:eqt:lbnetrans}) for the Game in Fig.~\ref{fig:game1}}} \end{minipage}
 \end{figure}
  \begin{example}\label{ch2:exm:extnote1} {\em
  		\begin{figure}[H]
  			\centering
  			\begin{minipage}[t]{0.53\textwidth}
  				\centering
  				\includegraphics[width=0.9\textwidth]{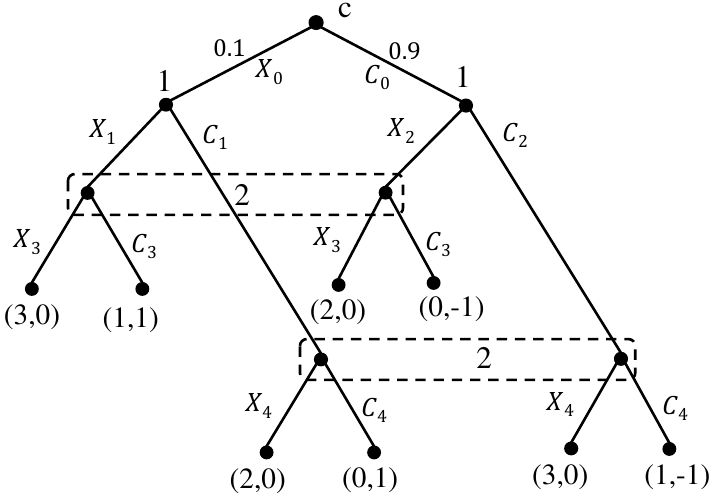}
  				\caption{\label{fig:game2}{\small An Extensive-Form Game from Myerson~\cite{myersonGameTheoryAnalysis1991}}}
  			\end{minipage}\hfill
  			\begin{minipage}[t]{0.44\textwidth}
  				\centering
  				\includegraphics[width=0.9\textwidth]{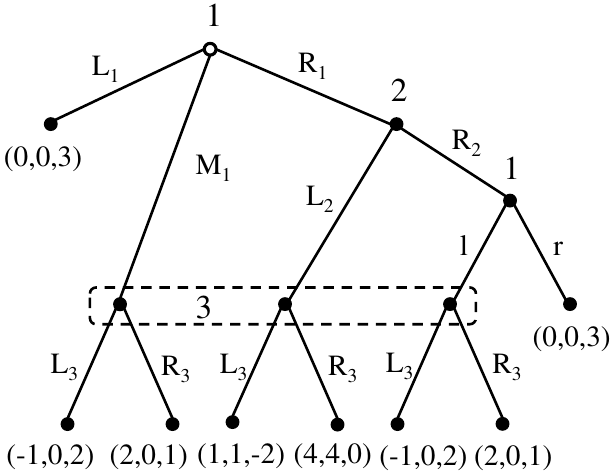}
  				\caption{\label{fig:game3}{\small An Extensive-Form Game from MasColell et al.~\cite{Mas-ColellMicroeconomictheory1995}}}
  			\end{minipage}
  		\end{figure}
  		Consider the extensive-form game $\Gamma$ illustrated in Fig.~\ref{fig:game2}, which replicates the structure of the game presented in Fig.~5.6 of Myerson~\cite{myersonGameTheoryAnalysis1991}. The information sets includes $I^1_c=\{\emptyset\}$, $I^1_1=\{\langle X_0\rangle\}$, $I^2_1=\{\langle C_0\rangle\}$, $I^1_2=\{\langle X_0,X_1\rangle,\langle C_0,X_2\rangle\}$, and $I^1_2=\{\langle X_0,C_1\rangle,\langle C_0,C_2\rangle\}$. The pure strategies available to the chance player are $s^c_1=\{X_0\}$ and $s^c_2=\{C_0\}$, with the corresponding mixed strategy fixed as $\bm{\sigma^c}=(0.1,0.9)^\top$. In the normal-form representation, the effect of chance can be incorporated directly into the payoff computation, thereby simplifying the analysis of pure-strategy profiles of the players. The reduced normal-form representation of this game is shown in Table~\ref{ch2:tab:exm2nf}.
  		\begin{table}[!ht]
  			\centering
  			\caption{Reduced Normal-Form Representation of Fig.~\ref{fig:game2} \label{ch2:tab:exm2nf}}
  			\begin{tabular}{lllll}
  				\toprule
  				& \multicolumn{4}{c}{Player 2 pure strategies} \\
  				\cmidrule(lr){2-5}
  				Player 1 pure strategies 
  				& \(s_1^2=\{X_3,X_4\}\) 
  				& \(s_2^2=\{X_3,C_4\}\) 
  				& \(s_3^2=\{C_3,X_4\}\) 
  				& \(s_4^2=\{C_3,C_4\}\) \\
  				\midrule
  				\(s_1^1=\{X_1,X_2\}\)   & \((2.1,0)\) & \((2.1,0)\) & \((0.1,-0.8)\)  & \((0.1,-0.8)\)  \\
  				\(s_2^1=\{X_1,C_2\}\) & \((3,0)\)  & \((1.2,-0.9)\) & \((2.8,0.1)\)  & \((1,-0.8)\) \\
  				\(s_3^1=\{C_1,X_2\}\) & \((2,0)\)  & \((1.8,0.1)\)  & \((0.2,-0.9)\)  & \((0,-0.8)\)  \\
  				\(s_4^1=\{C_1,C_2\}\) & \((2.9,0)\)  & \((0.9,-0.8)\)  & \((2.9,0)\)  & \((0.9,-0.8)\)  \\
  				\bottomrule
  			\end{tabular}
  		\end{table}
  		the paths produced by Systems~(\ref{ch3:eqt:lgnetrans}) and (\ref{ch3:eqt:lbnetrans}) are shown in Figs.~\ref{ch3:fig:lgnepathg2}--\ref{ch3:fig:lbnepathg2}, visually illustrating the convergence of the smooth path toward a mixed-strategies Nash equilibrium.
  	} 
  \end{example}
   \begin{figure}[htbp]
  	\centering
  	\begin{minipage}{0.49\textwidth}
  		\centering
  		\includegraphics[width=1\textwidth, height=0.20\textheight]{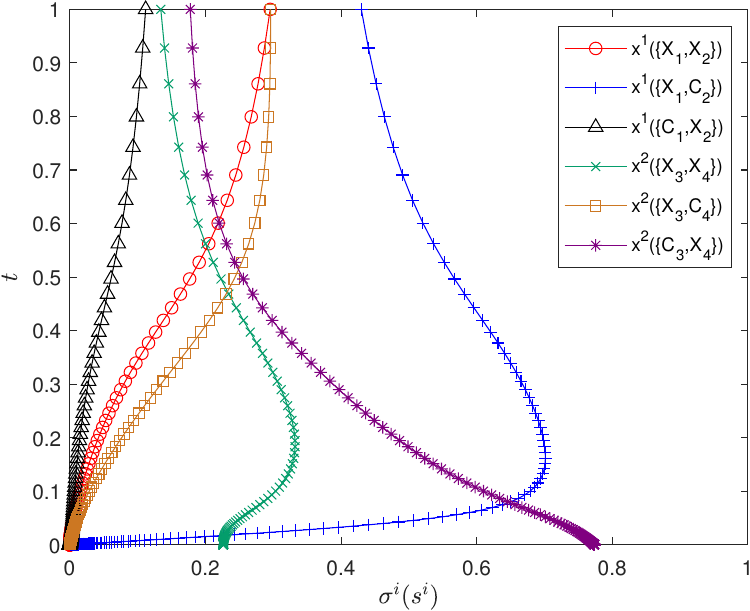}
  		\caption{\label{ch3:fig:lgnepathg2}{\footnotesize Mixed Strategies on the Path Specified by System~(\ref{ch3:eqt:lgnetrans}) for the Game in Fig.~\ref{fig:game2}}}\end{minipage}\hfill
  	\begin{minipage}{0.49\textwidth}
  		\centering
  		\includegraphics[width=1\textwidth, height=0.20\textheight]{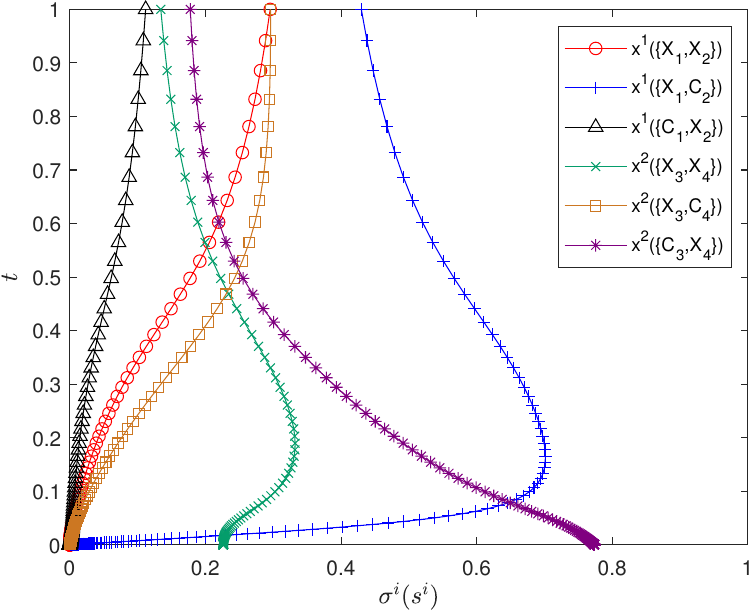}
  		\caption{\label{ch3:fig:lbnepathg2}{\footnotesize Mixed Strategies on the Path Specified by System~(\ref{ch3:eqt:lbnetrans}) for the Game in Fig.~\ref{fig:game2}}} \end{minipage}
  \end{figure}
  \begin{example}\label{ch2:exm:extnote2} {\em
  		Consider another extensive-form game $\Gamma$ shown in Fig.~\ref{fig:game3}, which is the game in Fig.~9.C.2 of MasColell et al.~\cite{Mas-ColellMicroeconomictheory1995}. The players' information sets are given by \( I^1_1=\{\emptyset\} \), \( I^2_1=\{\langle R_1,R_2\rangle\} \), \( I^1_2=\{\langle R_1\rangle\} \), and \( I^1_3=\{\langle M_1\rangle,\langle R_1,L_2\rangle,\langle R_1,R_2,l\rangle\} \). The reduced normal-form representation of the extensive-form game can be summarized in Tab.~\ref{ch2:tab:exm3nf}.
  		\begin{table}[!ht]
  			\centering
  			\caption{Reduced Normal-Form Representation of Fig.~\ref{fig:game3} \label{ch2:tab:exm3nf}}
  			\begin{tabular}{lllll}
  				\toprule
  				\multirow{2}{*}{\diagbox{$S^1$}{$S^2$ and $S^3$}}
  				& \multicolumn{2}{c}{$s^2_1=\{L_2\}$}
  				& \multicolumn{2}{c}{$s^2_2=\{R_2\}$} \\
  				\cmidrule(lr){2-3}\cmidrule(lr){4-5}
  				& $s^3_1=\{L_3\}$ & $s^3_2=\{R_3\}$
  				& $s^3_1=\{L_3\}$ & $s^3_2=\{R_3\}$ \\
  				\midrule
  				$s^1_1=\{L_1\}$ & $(0,0,3)$ & $(0,0,3)$ & $(0,0,3)$ & $(0,0,3)$ \\
  				$s^1_2=\{M_1\}$ & $(-1,0,2)$ & $(2,0,1)$ & $(-1,0,2)$ & $(2,0,1)$ \\
  				$s^1_3=\{R_1,l\}$ & $(1,1,-2)$ & $(4,4,0)$ & $(-1,0,2)$ & $(2,0,1)$ \\
  				$s^1_3=\{R_1,r\}$ & $(1,1,-2)$ & $(4,4,0)$ & $(0,0,3)$ & $(0,0,3)$ \\
  				\bottomrule
  			\end{tabular}
  		\end{table}
  		the paths produced by Systems~(\ref{ch3:eqt:lgnetrans}) and (\ref{ch3:eqt:lbnetrans}) are shown in Figs.~\ref{ch3:fig:lgnepathg3}--\ref{ch3:fig:lbnepathg3}, visually illustrating the convergence of the smooth path toward a mixed-strategies Nash equilibrium.
  	}
  \end{example}
 \begin{figure}[htbp]
 	\centering
 	\begin{minipage}{0.49\textwidth}
 		\centering
 		\includegraphics[width=1\textwidth, height=0.20\textheight]{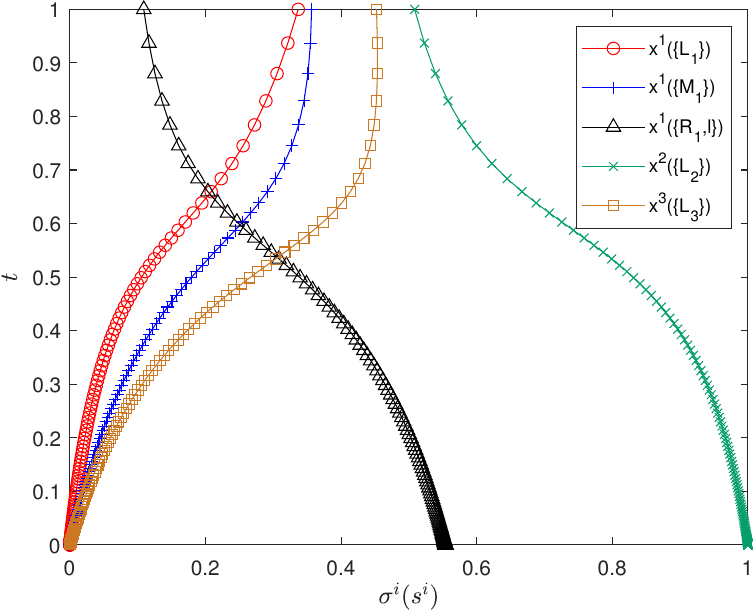}
 		\caption{\label{ch3:fig:lgnepathg3}{\footnotesize Mixed Strategies on the Path Specified by System~(\ref{ch3:eqt:lgnetrans}) for the Game in Fig.~\ref{fig:game3}}}\end{minipage}\hfill
 	\begin{minipage}{0.49\textwidth}
 		\centering
 		\includegraphics[width=1\textwidth, height=0.20\textheight]{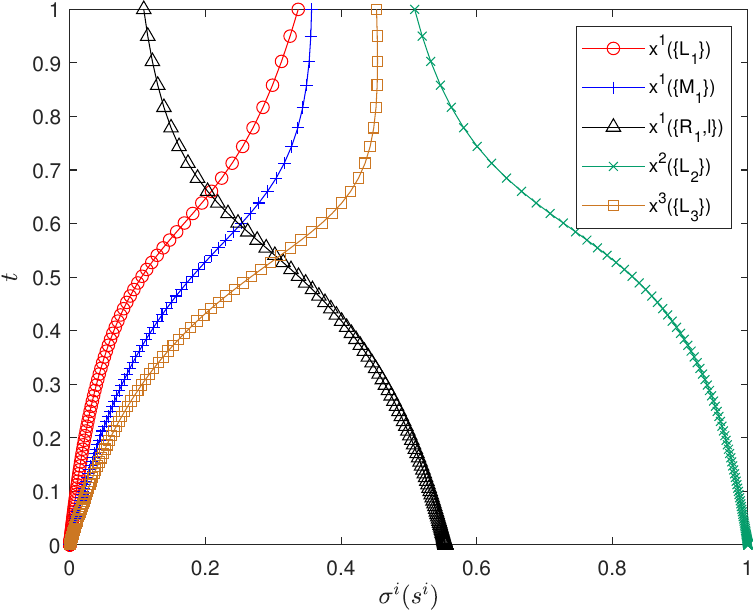}
 		\caption{\label{ch3:fig:lbnepathg3}{\footnotesize Mixed Strategies on the Path Specified by System~(\ref{ch3:eqt:lbnetrans}) for the Game in Fig.~\ref{fig:game3}}} \end{minipage}
 \end{figure}
 \subsection{Performance Evaluation}
 To compare the convergence performance of our methods, we employ two structurally distinct types of random extensive-form games, as shown in Figures~\ref{ch3:fig:rg1}--\ref{ch3:fig:rg2}. Both game types are parameterized by the number of players ($n$), the maximum historical depth ($\mathcal{L}$), and the number of allowable actions per information set ($\mathcal{A}$). In these games, players act cyclically, with the terminal payoffs determined by random integers uniformly distributed between $-10$ and $10$. A detailed explanation of the two game types is provided below.
 		\begin{itemize}
 			\item \textbf{Type 1:} As shown in Figures~\ref{ch3:fig:rg1}, histories are classified into the same information set only when they diverge in the final actions taken. Moreover, all terminal histories exhibit an identical length.
 			\item \textbf{Type 2:} As represented in Figures~\ref{ch3:fig:rg2}, this structural configuration is commonly found in the literature. For odd-indexed players, each information set consists of a single history. In contrast, for even-indexed players, histories are grouped into the same information set only when they share an identical corresponding sequence. The probability that player $0$ chooses each of the available actions is equal, and the total number of actions is fixed at $3$, without loss of generality.
 		\end{itemize}
 		Since the number of players does not directly impact the game size, we set $n=3$ for Type 1 games and $n=4$ for Type 2 games, adjusting the other two parameters to control the game size. To realize a comprehensive comparative analysis of the three path-following methods, $20$ random games with distinct payoffs were generated and solved for each parameter configuration $(\mathcal{L}, \mathcal{A})$ in both game types. Each instance was initialized from a randomly generated starting point, and the parameters of the predictor–corrector algorithm were kept consistent throughout all experiments. The predictor step size was set to $0.05t^{0.3}$, and the correction tolerance was set to $0.5t^{0.3}$. A run was considered successful if the termination criterion $t<10^{-4}$ was satisfied; otherwise, it was deemed a failure if either the iteration count or the computation time exceeded a prescribed limit. All computations were conducted on a Windows Server 2016 Standard system equipped with two Intel(R) Xeon(R) CPU E5-2650 v4 @ 2.20GHz processors and 128GB of RAM. The numerical results in Tables~\ref{t2} and \ref{t3}.
 		\begin{figure}[H]
 			\centering
 			\begin{minipage}[b]{0.55\textwidth}
 				\centering
 				\includegraphics[width=0.95\textwidth]{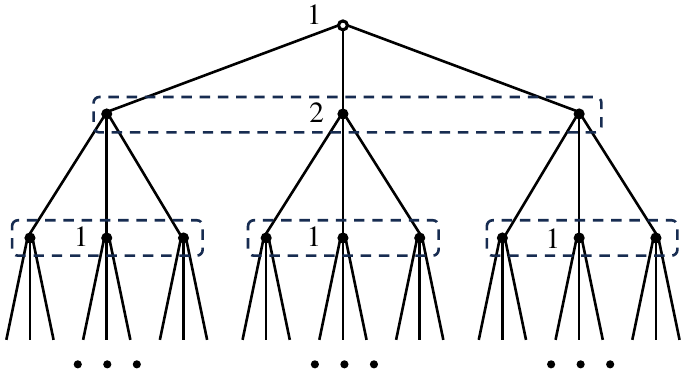}
 				\caption{\label{ch3:fig:rg1}{\small A Random Extensive-Form Game of Type 1}}
 			\end{minipage}\hfill
 			\begin{minipage}[b]{0.43\textwidth}
 				\centering
 				\includegraphics[width=0.95\textwidth]{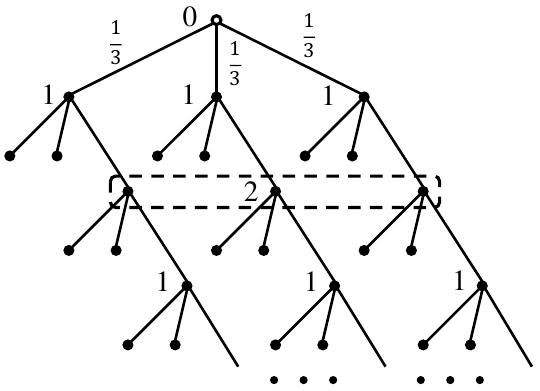}
 				\caption{\label{ch3:fig:rg2}{\small A Random Extensive-Form Game of Type 2}}
 			\end{minipage}
 	\end{figure}
 \begin{table}[htbp]\centering\renewcommand\arraystretch{0.95}
 	\caption{Numerical Comparisons for the Game in Fig.~\ref{ch3:fig:rg1}}\label{t2}
 	\begin{tabular*}{\textwidth}{@{\extracolsep\fill}>{\rowmac}l>{\rowmac}l>{\rowmac}l>{\rowmac}l>{\rowmac}l>{\rowmac}l>{\rowmac}l>{\rowmac}l>{\rowmac}l<{\clearrow}}\toprule
 		&  &  & \multicolumn{2}{l}{Iteration Numbers} & \multicolumn{2}{l}{Computational Time} & \multicolumn{2}{l}{Failure Rates} \\\cmidrule(r){4-5}\cmidrule(r){6-7}\cmidrule(r){8-9}
 		$(n,\mathcal{L},\mathcal{A})$ & Dim & & LGNE & LBNE & LGNE & LBNE & LGNE & LBNE\\\midrule
 		$(3,5,2)$ & $49$ & max & 316 & 242 & 9.9 & 10.0 & 0\% & 0\%\\
 		& & min & 121 & 136 & 4.0 & 4.4 & & \\
 		& &\setrow{\bfseries} med & 156.5 & 156.5 & 5.2 & 5.2 & &\\
 		$(3,6,2)$ & $97$ & max & 217 & 226 & 69.1 & 38.2 & 0\% & 0\%\\
 		& & min & 113 & 148 & 17.1 & 19.0 & & \\
 		& &\setrow{\bfseries} med & 166.0 & 178.0 & 21.4 & 23.4 & &\\
 		$(3,7,2)$ & $193$ & max & 592 & 442 & 372.1 & 352.8 & 0\% & 0\%\\
 		& & min & 171 & 74 & 98.4 & 105.1 & & \\
 		& &\setrow{\bfseries} med & 239.5 & 197.0 & 157.2 & 121.8 & &\\
 		$(3,8,2)$ & $385$ & max & - & - & - & - & 25\% & 15\%\\
 		& & min & 192 & 58 & 720.2 & 693.5 & & \\
 		& &\setrow{\bfseries} med & 278.5 & 226.5 & 1048.0 & 988.9 & &\\
 		$(3,4,3)$ & $57$ & max & 348 & 264 & 34.7 & 23.0 & 0\% & 0\%\\
 		& & min & 138 & 134 & 11.3 & 11.6 & & \\
 		& &\setrow{\bfseries} med & 188.0 & 186.5 & 16.0 & 16.1 & &\\
 		$(3,4,4)$ & $111$ & max & 491 & 341 & 515.2 & 374.3 & 0\% & 0\%\\
 		& & min & 161 & 158 & 91.6 & 95.2 & & \\
 		& &\setrow{\bfseries} med & 242.5 & 201.5 & 144.7 & 123.6 & &\\
 		$(3,4,5)$ & $193$ & max & - & - & - & - & 10\% & 20\%\\
 		& & min & 181 & 164 & 539.8 & 494.1 & & \\
 		& &\setrow{\bfseries} med & 233.5 & 239.0 & 708.6 & 748.5 & &\\
 		\bottomrule
 	\end{tabular*}
 \end{table}
 \begin{table}[htbp]\centering\renewcommand\arraystretch{0.95}
 	\caption{Numerical Comparisons for the Game in Fig.~\ref{ch3:fig:rg2}}\label{t3}
 	\begin{tabular*}{\textwidth}{@{\extracolsep\fill}>{\rowmac}l>{\rowmac}l>{\rowmac}l>{\rowmac}l>{\rowmac}l>{\rowmac}l>{\rowmac}l>{\rowmac}l>{\rowmac}l<{\clearrow}}\toprule
 		&  &  & \multicolumn{2}{l}{Iteration Numbers} & \multicolumn{2}{l}{Computational Time} & \multicolumn{2}{l}{Failure Rates} \\\cmidrule(r){4-5}\cmidrule(r){6-7}\cmidrule(r){8-9}
 		$(n,\mathcal{L},\mathcal{A})$ & Dim & & LGNE & LBNE & LGNE & LBNE & LGNE & LBNE\\\midrule
 		$(4,10,2)$ & $61$ & max & 177 & 175 & 9.3 & 9.0 & 0\% & 0\%\\
 		& & min & 82 & 84 & 4.5 & 4.5 & & \\
 		& &\setrow{\bfseries} med & 105.5 & 107.0 & 5.8 & 5.9 & &\\
 		$(4,20,2)$ & $121$ & max & 163 & 178 & 41.7 & 45.6 & 0\% & 0\%\\
 		& & min & 84 & 97 & 20.2 & 23.1 & & \\
 		& &\setrow{\bfseries} med & 102.5 & 110.5 & 24.3 & 26.5 & &\\
 		$(4,30,2)$ & $181$ & max & 263 & 302 & 147.3 & 169.7 & 0\% & 0\%\\
 		& & min & 93 & 101 & 55.5 & 60.9 & & \\
 		& &\setrow{\bfseries} med & 115.0 & 119.5 & 70.2 & 72.7 & &\\
 		$(4,40,2)$ & $241$ & max & 147 & 158 & 193.5 & 207.8 & 0\% & 0\%\\
 		& & min & 91 & 106 & 121.6 & 140.7 & & \\
 		& &\setrow{\bfseries} med & 113.0 & 119.0 & 150.5 & 159.2 & &\\
 		$(4,10,4)$ & $101$ & max & 167 & 166 & 42.8 & 43.1 & 0\% & 0\%\\
 		& & min & 94 & 103 & 24.1 & 26.6 & & \\
 		& &\setrow{\bfseries} med & 110.5 & 115.5 & 28.4 & 30.0 & &\\
 		$(4,10,6)$ & $141$ & max & 244 & 262 & 153.4 & 165.1 & 0\% & 0\%\\
 		& & min & 114 & 120 & 73.9 & 77.8 & & \\
 		& &\setrow{\bfseries} med & 124.0 & 128.5 & 80.7 & 84.5 & &\\
 		$(4,10,8)$ & $181$ & max & 142 & 143 & 179.7 & 183.3 & 0\% & 0\%\\
 		& & min & 121 & 127 & 154.5 & 166.1 & & \\
 		& &\setrow{\bfseries} med & 130.0 & 133.5 & 165.9 & 171.5 & &\\
 		$(4,10,10)$ & $221$ & max & 144 & 147 & 327.4 & 334.1 & 0\% & 0\%\\
 		& & min & 131 & 135 & 290.7 & 306.6 & & \\
 		& &\setrow{\bfseries} med & 136.5 & 142.0 & 307.6 & 322.0 & &\\
 		\bottomrule
 	\end{tabular*}
 \end{table}
 \section{Conclusion}
 \label{section6}
 This paper investigates the computation of Nash equilibria in finite $n$-player extensive-form games with perfect recall under the sequence-form representation. We depart from the conventional view of the sequence form as merely a computational reformulation and instead develop a direct sequence-form characterization of Nash equilibrium. By establishing a rigorous equivalence with mixed-strategy Nash equilibrium, we derive a well-defined sequence-form Nash equilibrium system, providing a stronger theoretical foundation for equilibrium analysis. On this basis, we develop a single-stage interior-point differentiable path-following method for solving the equilibrium system. The proposed method employs logarithmic-barrier regularization to generate a smooth solution path within the interior of the realization-plan space, yielding favorable numerical stability and convergence properties. Computational results confirm the effectiveness and efficiency of the approach. The proposed framework bridges equilibrium characterization and computation in the sequence-form setting. Future work may consider extensions to equilibrium refinements, improved scalability for large-scale instances, and broader applications in multi-agent decision-making problems.
 
 \begin{appendices}
\section{The Boundedness of $\widetilde{\mathscr{S}}_L$ in Theorem~\ref{ch3:thm:lgnecnt}}
\label{app:compactness_sl}

The objective of this appendix is to elucidate the boundedness of $\widetilde{\mathscr{S}}_L$, which is necessary for proving Theorem~\ref{ch3:thm:lgnecnt}.

Let $(\gamma^*, \lambda^*, \nu^*,t)\in \mathscr{S}_L$ be a solution to System~(\ref{ch3:eqt:lgne}). By applying the backward induction method to the first group of equations in System~(\ref{ch3:eqt:lgne}), we derive the following equations for $i\in N$, $j\in M_i$, and $a\in A(I^j_i)$,
\begin{equation}\label{app1:recs}
	\begin{array}{l}
		-\nu^{*i}_{I^j_i}+\sum\limits_{j_q\in M_i,a_q\in A(I^{j_q}_i),a\in\varpi^i_{I^{j_q}_i}a_q}\frac{\gamma^{*i}(\varpi^i_{I^{j_q}_i}a_q)}{\gamma^{*i}(\varpi^i_{I^{j}_i}a)}(1-t)g^i(\varpi^i_{I^{j_q}_i}a_q,\gamma^{*-i})\\
		\hspace{3.5cm}+\sum\limits_{(j_q,a_q)\in D_i,a\in\varpi^i_{I^{j_q}_i}a_q}\frac{\gamma^{*i}(\varpi^i_{I^{j_q}_i}a_q)}{\gamma^{*i}(\varpi^i_{I^{j}_i}a)}(\lambda^{*i}(\varpi^i_{I^{j_q}_i}a_q)-t)=0.
	\end{array}
\end{equation}
Specifically, for $i\in N$, $(j,a)\in D_i$, the equations (\ref{app1:recs}) can be directly obtained from the first group of equations in System (\ref{ch3:eqt:lgne}). In cases where $(j,a)\notin D_i$, we assume that Equations (\ref{app1:recs}) hold for all $j_0\in M_i(\varpi^i_{I^j_i}a), a_0\in A(I^{j_0}_i)$. By multiplying both sides of Equations (\ref{app1:recs}) by $\gamma^*(\varpi^i_{I^{j_0}_i}a_0)/\gamma^*(\varpi^i_{I^{j_0}_i})$ and summing over $a_0\in A(I^{j_0}_i)$, we derive the expression for $\nu^i_{I^{j_0}_i}$. Finally, by substituting $\zeta^i_{I^j_i}(a)$ with the recursive outcomes, we obtain Equations (\ref{app1:recs}) for $i\in N$, $(j,a)\notin D_i$. Let $U^i_l=\min_{h\in Z}u^i(h)$ and $U^i_u=\max_{h\in Z}u^i(h)$. Equations (\ref{app1:recs}) imply that $\nu^i_{I^j_i}\geq-|W^i||U^i_l|-|D_i|$ for any $i\in N,j\in M_i$. Based on the above derivations, we analyze the upper bound of $(\lambda^*,\nu^*)$.

For $i\in N,j\in M_i$ with $\varpi^i_{I^j_i}=\emptyset$, we multiply $\gamma^{*i}(\varpi^i_{I^{j}_i}a)$ on both sides of Equations (\ref{app1:recs}) and sum over $a\in A(I^j_i)$, obtaining
\begin{equation}\label{app1:recsroot}
	\begin{array}{l}
		-\nu^{*i}_{I^j_i}+\sum\limits_{j_q\in M_i,a_q\in A(I^{j_q}_i),\varpi^i_{I^{j_q}_i}a_q\cap A(I^i_j)\neq\emptyset}\gamma^{*i}(\varpi^i_{I^{j_q}_i}a_q)(1-t)g^i(\varpi^i_{I^{j_q}_i}a_q,\gamma^{*-i})\\
		\hspace{3.0cm}+\sum\limits_{(j_q,a_q)\in D_i,\varpi^i_{I^{j_q}_i}a_q\cap A(I^i_j)\neq\emptyset}\gamma^{*i}(\varpi^i_{I^{j_q}_i}a_q)(\lambda^{*i}(\varpi^i_{I^{j_q}_i}a_q)-t)=0.
	\end{array}
\end{equation}
Given that $\gamma^{*i}(\varpi^i_{I^{j_q}_i}a_q)\lambda^{*i}(\varpi^i_{I^{j_q}_i}a_q)\leq\gamma^{0i}(\varpi^i_{I^{j}_i}a)< 1$, it follows from Equations~(\ref{app1:recsroot}) that $\nu^{*i}_{I^j_i}\leq|W^i||U^i_u|\triangleq V^j_i$. From the second group of equations in System (\ref{ch3:eqt:lgne}), we further obtain $\zeta^i_{I^j_i}(a)\leq V^j_i$. Consequently, for any $a\in A(I^j_i)$ and $j_0\in M_i(\varpi^i_{I^j_i}a)$, we have $\nu^{*i}_{I^{j_0}_i}\leq V^j_i+(|M_i(\varpi^i_{I^j_i}a)|-1)(|W^i||U^i_l|+|D_i|)\triangleq V^{j_0}_i$. This reasoning extends recursively through forward induction. Since the game is finite, $\nu^{*i}_{I^j_i}$ is bounded above by $V_u=\max_{i_q\in N,j_q\in M_i}V^{j_q}_{i_q}$ for any $i\in N$ and $j\in M_i$. Hence, from the first group of equations in System (\ref{ch3:eqt:lgne}), it follows that $\lambda^{*i}(\varpi^i_{I^j_i}a)\leq V_u+|U^i_l|+1$ for $i\in N$, $(j,a)\in D_i$.

\section{Jacobian Matrix of $p(\cdot)$ Full-Row Rank Proof in Theorem~\ref{ch3:thm:lgnesmp}}
\label{app:fullrowrank}

This appendix proves that the Jacobian matrix $Dp(\gamma,\lambda,\nu,t;\alpha)$ of $p(\gamma,\lambda,\nu,t;\alpha)$ has full row rank on $\text{int}(\Lambda)\times\mathbb{R}^{n_0}_{++}\times\mathbb{R}^{m_0}\times(0,1)\times\mathbb{R}^{n_0}$, which is critical for the proof of Theorem~\ref{ch3:thm:lgnesmp}. \par
Consider the case where $t\in(0,1)$. We denote the first $n_0$ terms of $p(\gamma,\lambda,\nu,t;\alpha)$ as $g(\gamma,\lambda,\nu,t;\alpha)$. The Jacobian matrix $Dp(\gamma,\lambda,\nu,t;\alpha)$ is given by
\begin{equation}
	Dp(\gamma,\lambda,\nu,t;\alpha)=\left(\begin{array}{ccccc}
		D_{\gamma} g	& D_\lambda g	& D_\nu g	& D_t g		& -t(1-t)I^{n_0\times n_0} \\
		B				& 0				& 0			& 0			& 0\\
		C				& E				& 0			& F			& 0
	\end{array}\right),
	\nonumber
\end{equation}
where $I^{n_0\times n_0}$ is an $n_0\times n_0$ identity matrix, $B=\bar B-\tilde B$, 
\begin{equation}
	\bar B=\left(\begin{array}{cccc}
		{e^1_1}^\top&&&\\
		&{e^2_1}^\top&&\\
		&&\ddots&\\
		&&&{e^{m_n}_n}^\top
	\end{array}\right)\in \mathbb{R}^{m_0\times n_0} \text{ with } e^{j}_i=(1,1,\ldots,1)^\top\in\mathbb{R}^{|A(I^j_i)|}.
	\nonumber
\end{equation}
The matrix $\tilde B\in \mathbb{R}^{m_0\times n_0}$ is defined such that, in each row, the element corresponding to the sequence associated with the relevant information set takes the value $1$, whereas all remaining elements are set to $0$. The matrix $C\in \mathbb{R}^{|D_i|\times n_0}$ assigns $\lambda^i(\varpi^i_{I^j_i}a)$ to the element whose row and column both correspond to the sequence $\varpi^i_{I^j_i}a$, $(j,a)\in D_i$. $E\in \mathbb{R}^{|D_i|\times |D_i|}$ is a diagonal matrix with its element corresponding to the sequence $\varpi^i_{I^j_i}a$ defined as $\gamma^i(\varpi^i_{I^j_i}a)$. The vector $F\in\mathbb{R}^{|D_i|}$ is the partial derivative of the left-hand side of the third equation group in System~(\ref{ch3:eqt:lgne}) with respect to $t$. We observe that $I^{n_0\times n_0}$, $B$ and $E$ are of full-row rank. Thus, for any $t\in(0,1)$, the Jacobian matrix $Dp(\gamma,\lambda,\nu,t;\alpha)$ is of full-row rank.\par
When $t=1$, one can see that System~(\ref{ch3:eqt:lgne}) is reduced to System~(\ref{ch3:eqt:lgnests}), the Jacobian matrix becomes
\begin{equation}
	Dp(\gamma,\lambda,\nu,1;\alpha)=\left(\begin{array}{ccccc}
		0	& 0			& I^{|D_i|\times |D_i|}		& -B^{\top}_1		& 0\\
		0	& 0			& 0							& -B^{\top}_2		& 0\\
		B_1	& B_2		& 0							& 0				& 0\\
		C_0	& 0			& E							& 0				& 0 
	\end{array}\right). 
	\nonumber
\end{equation}
where the matrix $B$ is column-wise reordered into $(B_1, B_2)$, where $B_1$ includes columns associated with sequences $\varpi^i_{I^j_i}a$ for $(j,a)\in D_i$, and $B_2$ includes those for $(j,a)\notin D_i$. It can be observed that $B_2$ has full column rank. $C_0\in \mathbb{R}^{|D_i|\times |D_i|}$ is a diagonal matrix whose element, corresponding to the sequence $\varpi^i_{I^j_i}a,(j,a)\in D_i$, equal to $\lambda^i(\varpi^i_{I^j_i}a)$. Applying row and column operations, one can transform $Dp(\gamma,\lambda,\nu,1;\alpha)$ to 
\begin{equation}
	Dp(\gamma,\lambda,\nu,1;\alpha)=\left(\begin{array}{ccccc}
		C_0E^{-1}	& 0			& 0					& 0		& 0\\
		0			& 0			& -B^{\top}_2			& 0		& 0\\
		0			& B_2		& B_1EC_0^{-1}B^{\top}_1	& 0		& 0\\
		0			& 0			& 0					& E		& 0 
	\end{array}\right). 
	\nonumber
\end{equation}
Let  
$
M = 
\begin{pmatrix}
	0 & -B_2^{\top} \\[4pt]
	B_2 & B_1 E C_0^{-1} B_1^{\top}
\end{pmatrix},
\;
\begin{pmatrix} x \\ y \end{pmatrix}
\text{ satisfies }
M 
\begin{pmatrix} x \\ y \end{pmatrix} = 0.
$ Expanding the block equations yields  
\[
\begin{cases}
	-B_2^{\top} y = 0, \\[4pt]
	B_2 x + B_1 E C_0^{-1} B_1^{\top} y = 0.
\end{cases}
\]
From the first equation, we obtain \(B_2^{\top} y = 0\). Premultiplying the second equation by \(y^{\top}\) gives  
\[
y^{\top} B_2 x + y^{\top} B_1 E C_0^{-1} B_1^{\top} y = 0.
\]
Since \(B_2^{\top} y = 0\), the first term vanishes, and thus \(y^{\top} B_1 E C_0^{-1} B_1^{\top} y = 0\). Because \(E C_0^{-1}\) is positive definite, it follows that \(B_1^{\top} y = 0\). Consequently,
\[
B_1^{\top} y = 0, \qquad B_2^{\top} y = 0.
\]
Given that \((B_1, B_2)\) has full row rank, we conclude that \(y = 0\). Substituting \(y = 0\) into the second equation gives \(B_2 x = 0\). As \(B_2\) is of full column rank, it follows that \(x = 0\). Therefore, \(M\) is nonsingular, implying that \(Dp(\gamma,\lambda,\nu,1;\alpha)\) has full row rank.
\end{appendices}
\newpage
\bibliography{library}
\end{document}